\documentclass[lettersize,journal]{IEEEtran}
\usepackage{amsmath,amssymb,amsfonts,mathrsfs,commath,mathtools}
\usepackage{algorithmic}
\usepackage{booktabs}
\usepackage{multirow}
\usepackage{algorithm}
\usepackage{array}
\usepackage[caption=false,font=normalsize,labelfont=sf,textfont=sf]{subfig}
\usepackage{textcomp}
\usepackage{stfloats}
\usepackage{url}
\usepackage{verbatim}
\usepackage{graphicx}
\usepackage{cite}

\usepackage{graphicx}
\usepackage{epstopdf}
\usepackage{amsthm}

\allowdisplaybreaks

\DeclareMathOperator{\sgn}{sgn}

\DeclareMathOperator{\esup}{ess\, sup}



\providecommand*{\ped}[1]{%
\ensuremath{_\textnormal{#1}}}

\providecommand*{\eu}%
{\ensuremath{\mathrm{e}}}
\providecommand*{\im}%
{\ensuremath{\mathrm{j}}}
\providecommand*{\GammaF}%
{\ensuremath{\mathrm{\Gamma}}}
\providecommand*{\BetaF}%
{\ensuremath{\mathrm{\Beta}}}

\newtheorem{remark}{Remark}[section]
\newtheorem{lemma}{Lemma}[section]
\newtheorem{theorem}{Theorem}[section]

\newtheorem{definition}{Definition}[section]
\newtheorem{corollary}{Corollary}[section]
\usepackage[table]{xcolor} 
\usepackage{hyperref}
\hypersetup{
    colorlinks = true,
    linkbordercolor = {white},
            linkcolor=blue,
            bookmarks=true,
            filecolor=blue,
            urlcolor=blue,
            citecolor=blue
}

\hypersetup{hidelinks=false}
\usepackage{textcomp}

\begin{document}

\title{First-order friction models with bristle dynamics:\\ lumped and distributed formulations}

\author{Luigi Romano, \IEEEmembership{Member, IEEE}, Ole Morten Aamo, \IEEEmembership{Senior Member, IEEE}, Jan Åslund, and Erik Frisk 
\thanks{This research was financially supported by the project FASTEST (Reg. no. 2023-06511), funded by the Swedish Research Council.}
\thanks{L. Romano is with the Department of Electrical Engineering, Linköping University, Linköping, Sweden, and the Department of Engineering Cybernetics, NTNU, Trondheim, Norway (e-mail: luigi.romano@liu.se).}
\thanks{O. M. Aamo is with the Department of Engineering Cybernetics, NTNU, Trondheim, Norway (e-mail: ole.morten.aamo@ntnu.no).}
\thanks{J. Åslund and E. Frisk are with the Department of Electrical Engineering, Linköping University, Linköping, Sweden (e-mail: jan.aslund@liu.se and erik.frisk@liu.se).}
\thanks{Manuscript received April 19, 2021; revised August 16, 2021.}}

\markboth{Journal of \LaTeX\ Class Files,~Vol.~14, No.~8, August~2021}%
{Luigi Romano \MakeLowercase{\textit{et al.}}: First-order friction models with bristle dynamics: lumped and distributed formulations}


\maketitle

\begin{abstract}
Dynamic models, particularly rate-dependent models, have proven effective in capturing the key phenomenological features of frictional processes, whilst also possessing important mathematical properties that facilitate the design of control and estimation algorithms. However, many rate-dependent formulations are built on empirical considerations, whereas physical derivations may offer greater interpretability.
In this context, starting from fundamental physical principles, this paper introduces a novel class of first-order dynamic friction models that approximate the dynamics of a bristle element by inverting the friction characteristic. Amongst the developed models, a specific formulation closely resembling the LuGre model is derived using a simple rheological equation for the bristle element. This model is rigorously analyzed in terms of stability and passivity -- important properties that support the synthesis of observers and controllers. Furthermore, a distributed version, formulated as a \emph{hyperbolic partial differential equation} (PDE), is presented, which enables the modeling of frictional processes commonly encountered in rolling contact phenomena. The tribological behavior of the proposed description is evaluated through classical experiments and validated against the response predicted by the LuGre model, revealing both notable similarities and key differences.
\end{abstract}

\begin{IEEEkeywords}
Friction, stability, passivity, distributed parameter systems, hyperbolic PDEs
\end{IEEEkeywords}

\section{Introduction}
Friction is a fundamental physical phenomenon that profoundly influences the behavior of a broad spectrum of mechanical systems, ranging from servo mechanisms to pneumatic and hydraulic actuators \cite{Flores1,Flores2,Flores3,NonlinearDynId,2D}. Whilst it plays a crucial role in enabling force transmission and traction generation, friction also introduces undesirable effects in control systems -- such as tracking errors, stick-slip oscillations, and limit cycles -- that can significantly degrade performance, especially in high-precision or low-velocity applications \cite{Motors,Hydraulics,Hydraulics2,Compensation}. As a result, friction estimation and compensation remain persistent challenges \cite{CST1,CST2,CST3,CST4}, necessitating accurate and reliable models that capture its complex phenomenology. Traditional static representations, such as Coulomb, viscous, and Stribeck friction models, often fall short in describing the nuanced and dynamic behavior of friction in these demanding scenarios \cite{2D,Antali}.

In response to these limitations, the past few decades have witnessed a progressive shift toward dynamic friction models capable of capturing key phenomena such as pre-sliding hysteresis, velocity-dependent memory, and time-lag effects. This evolution has spanned disciplines as diverse as geophysics and electromechanics, gaining renewed momentum within the systems and control community following Armstrong's seminal 1991 review \cite{Armstrong}.

Amongst the various models proposed, the LuGre description \cite{Astrom1,Olsson,Astrom2} represents a pivotal milestone. Drawing inspiration from the rate-dependent formulations developed in geophysics by Rice and Ruina \cite{Rice}, the LuGre model provides a computationally tractable representation that quickly gained popularity due to its ability to qualitatively replicate a wide range of frictional behaviors.
However, despite its merits, the LuGre model fails to capture nonlocal memory effects in the pre-sliding regime – a key limitation when modeling friction near velocity reversals, where the force-displacement relationship is governed primarily by adhesive interactions at asperity contacts. Indeed, in this regime, experimental observations suggest that friction behaves more like a hysteresis function of displacement, rather than velocity. To address this shortcoming, several extensions of the LuGre model have been proposed \cite{Integrated,Leuven,Elasto1,Elasto2,GMS}. For instance, the Leuven model \cite{Integrated,Leuven} incorporates nonlocal hysteresis effects, albeit at the cost of significantly increased implementation complexity. Another line of development led to the so-called generalized Maxwell-slip (GMS) models \cite{GMS}, which embed rate-state dynamics into the slip phases of classical Maxwell-slip blocks. These models offer an improved representation of displacement-dependent hysteresis and nonlocal memory, whilst retaining a relatively simple structure.

More generally, rate-dependent models are typically described by two fundamental equations: an algebraic one, which postulates a suitable rheological model for generating the friction force, and a dynamic one, which governs the time evolution of the frictional state(s). These equations are often formulated independently, to replicate common behaviors observed in mechanical systems, such as hysteresis, frictional lag, and stick-slip \cite{Fal,Fal2}. However, physically grounded justifications for these models are essential, as they provide greater interpretability and clarity, as emphasized in \cite{GMS,Fal3}. In this context, recent studies \cite{Rill} have demonstrated that the LuGre model can be derived by inverting the implicit relationship between displacement and friction force in a simple rheological bristle model. This concept of friction inversion is not entirely novel and has proven to be instrumental in modeling transient rolling contact phenomena, which typically necessitate distributed parameter formulations \cite{LibroMio,Two-regime}.

Building on this foundation, and inspired by the ideas contained in \cite{Rill}, this paper introduces a novel class of rate-dependent dynamical friction models derived by systematic inversion of the friction characteristic. This approach yields a family of physically motivated models that are both interpretable and amenable to rigorous mathematical analysis. Whereas the proposed approach may be, in principle, applied to any existing model, a specific variant, based on a linear spring-damper constitutive law for the bristle element, is derived that retains the structural simplicity of the LuGre model. The model's tractability facilitates stability, dissipativity, and passivity analysis, and it naturally extends to a distributed formulation suitable for the description of rolling contact phenomena. As for its lumped counterpart, the behavior of this distributed variant is rigorously analyzed in this paper in terms of stability and passivity. 

More explicitly, the main contributions of this work are:
\begin{enumerate}
\item The introduction of a general procedure to derive physically-consistent lumped and distributed first-order dynamic friction models starting from rheological descriptions of bristle elements, 
\item The development and analysis of a specific friction model inspired by the LuGre formulation, in both its lumped and distributed variants,
\item The experimental validation of the proposed model in its lumped form.
\end{enumerate}

The choice of using the LuGre as a benchmark is simple: despite its limitations, it remains the most widely adopted rate-dependent formulation encountered in the control literature, as well as in the modeling of many mechanical systems \cite{2D,Antali}. In particular, its enduring popularity can be attributed to its mathematically elegant structure and its capacity to serve as a versatile heuristic framework. In the same context, it is also worth mentioning that distributed parameter extensions to the LuGre model have been successfully introduced to describe rolling contact processes in \cite{TsiotrasConf,Tsiotras1, Tsiotras2,Deur1,Deur2,LuGreSpin,CarcassDyn}, providing a fertile terrain for comparison with the distributed formulation proposed in this paper. The stability and passivity analyses conducted in this work follow instead the approach of \cite{LuGreDistr}, but with important differences that highlight both the distinct dynamical behavior of the proposed model and the implications for its parametrization. In particular, it is revealed that passivity holds virtually for every combination of model parameters, as opposed to the LuGre model, which requires postulating the damping coefficient as velocity-dependent.

The remainder of this manuscript is organized as follows. Section~\ref{Sect:Gen} first reviews the general mathematical structure of rate-dependent models, and then discusses how to derive their governing equations from a physical approximation of the friction characteristic. A lumped, LuGre-inspired formulation, built according to the procedure outlined in Section~\ref{Sect:Gen}, is then introduced in Section~\ref{eq:LumpedLuGre}; the corresponding distributed extension is deduced in Section~\ref{sect:Distr}. The tribological response of the proposed lumped model is then compared against that of the original LuGre formulation in Section~\ref{sect:exper}, where its experimental validation is also addressed. Finally, the main conclusions of the paper are summarized in Section~\ref{ref:Concl}, where future directions for research are also indicated.

\subsection*{Notation and preliminaries}
In this paper, $\mathbb{R}$ denotes the set of real numbers; $\mathbb{R}_{>0}$ and $\mathbb{R}_{\geq 0}$ indicate the set of positive real numbers excluding and including zero, respectively. The set of positive integer numbers is indicated with $\mathbb{N}$, whereas $\mathbb{N}_{0}$ denotes the extended set of positive integers including zero, i.e., $\mathbb{N}_{0} = \mathbb{N} \cup \{0\}$.
$L^2((0,1);\mathbb{R})$ denotes the Hilbert space of square-integrable functions on $(0,1)$ with values in $\mathbb{R}$, endowed with inner product $\langle u, v \rangle_{L^2((0,1);\mathbb{R})} = \int_0^1 u(\xi)v(\xi) \dif \xi$ and induced norm $\norm{u(\cdot)}_{L^2((0,1);\mathbb{R})}^2 \triangleq \langle u, u \rangle_{L^2((0,1);\mathbb{R})}$. $H^1((0,1);\mathbb{R})$ is the Sobolev space of functions $u \in L^2((0,1);\mathbb{R})$ whose weak derivative also belongs to $L^2((0,1);\mathbb{R})$; it is a Hilbert space equipped with norm $\norm{u(\cdot)}_{H^1((0,1);\mathbb{R})}^2 \triangleq \norm{u(\cdot)}_{L^2((0,1);\mathbb{R})}^2+ \norm{\pd{u(\cdot)}{\xi}}_{L^2((0,1);\mathbb{R})}^2$. $C^k(\overline{\Omega};\mathcal{X})$, $k\in \{0,1,\dots,\infty\}$, denotes the spaces of $k$-times continuously differentiable functions on $\overline{\Omega}$ with values in $\mathcal{X}$, where $\overline{\Omega}$ is the closure of a bounded domain $\Omega$, and $\mathcal{X}$ may be a Banach space or a structured subset of $\mathbb{R}$ (for $T = \infty$, the closure of $(0,T)$ is identified with $\mathbb{R}_{\geq 0}$). For a function $f :\Omega \mapsto \mathbb{R}$, the sup norm is defined as $\norm{f(\cdot)}_\infty \triangleq \esup_{\Omega} \abs{f(\cdot)}$; $f : \Omega \mapsto \mathbb{R}$ belongs to the space $L^\infty(\Omega;\mathbb{R})$ if $\norm{f(\cdot)}_\infty < \infty$. A function $f \in C^0(\mathbb{R}_{\geq 0}; \mathbb{R}_{\geq 0})$ belongs to the space $\mathcal{K}$ if it is strictly increasing and $f(0) = 0$; $f \in \mathcal{K}$ belongs to the space $\mathcal{K}_\infty$ if it is unbounded. Finally, a function $f \in C^0(\mathbb{R}_{\geq 0}^2; \mathbb{R}_{\geq 0})$ belongs to the space $\mathcal{KL}$ if $f(\cdot,t) \in \mathcal{K}$ and is strictly decreasing in its second argument, with $\lim_{t\to \infty}f(\cdot,t) = 0$. 

An application of Theorem~\ref{thm:Theorem1} below will guide the derivation of the rate-dependent friction models developed in the paper.

\begin{theorem}[Edwards \cite{Edwards}]\label{thm:Theorem1}
Suppose that the mapping $H : \mathbb{R}^{m+n}\mapsto \mathbb{R}^n$ is $C^1$ in a neighborhood of a point $(x^\star,y^\star)$, where $H(x^\star,y^\star) = 0$. If the Jacobian matrix $\nabla_{y}H(x^\star,y^\star)^{\mathrm{T}}$ is nonsingular, there exist a neighborhood $\mathcal{X}$ of $x^\star$ in $\mathbb{R}^m$, a neighborhood $\mathcal{Y}$ of $(x^\star,y^\star)$ in $\mathbb{R}^{m+n}$, and a mapping $h \in C^1(\mathcal{U};\mathbb{R}^n)$ such that $y =h(x)$ solves the equation $H(y,x) = 0$ in $\mathcal{Y}$. 
In particular, the implicitly defined mapping $h(\cdot)$ is the limit of the sequence $\{h_k\}_{ k \in \mathbb{N}_0}$ of the successive approximations inductively defined by
\begin{subequations}
\begin{align}
h_{k+1}(x) & = h_k(x) - \nabla_{y}H(x^\star,y^\star)^{-\mathrm{T}}H\bigl(x,h_k(x)\bigr), \\
h_0(x) & = y^\star,
\end{align}
\end{subequations}
for $x\in \mathcal{X}$.
\end{theorem}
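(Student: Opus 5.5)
We prove Theorem~\ref{thm:Theorem1} with the contraction mapping principle, applied to the fixed-point formulation suggested by the iteration itself. Write $A \triangleq \nabla_{y}H(x^\star,y^\star)^{\mathrm{T}}$, which is invertible by assumption, and for each fixed $x$ define
\begin{equation*}
T_x(y) \triangleq y - A^{-1}H(x,y).
\end{equation*}
Then $y$ solves $H(x,y)=0$ if and only if $y$ is a fixed point of $T_x$, and the recursion in the statement reads $h_{k+1}(x)=T_x\bigl(h_k(x)\bigr)$ with $h_0(x)=y^\star$. It therefore suffices to show that, for $x$ close to $x^\star$, $T_x$ is a contraction of a closed ball $\overline{B}_r(y^\star)$ into itself. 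Banach's theorem then gives existence and uniqueness of $h(x)$ in that ball and convergence of the successive approximations.

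\emph{Step 1 (contraction).} Differentiating gives $\nabla_y T_x(y)^{\mathrm{T}} = I - A^{-1}\nabla_yH(x,y)^{\mathrm{T}}$, which vanishes at $(x^\star,y^\star)$. Since $H$ is $C^1$, there exist $r>0$ and $\delta_0>0$ such that $\norm{I - A^{-1}\nabla_yH(x,y)^{\mathrm{T}}}\le \tfrac12$ whenever $\norm{x-x^\star}\le\delta_0$ and $\norm{y-y^\star}\le r$. By the mean value inequality, $T_x$ is then $\tfrac12$-Lipschitz on $\overline{B}_r(y^\star)$.

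\emph{Step 2 (self-map).} We need
\begin{equation*}
\norm{T_x(y)-y^\star}\le \norm{T_x(y)-T_x(y^\star)}+\norm{T_x(y^\star)-y^\star}\le \tfrac{r}{2}+\norm{A^{-1}H(x,y^\star)}.
\end{equation*}
Since $H(x^\star,y^\star)=0$ and $H$ is continuous, we can choose $\delta\le\delta_0$ such that $\norm{A^{-1}H(x,y^\star)}\le r/2$ for $\norm{x-x^\star}<\delta$. Take $\mathcal{X}$ to be this ball and $\mathcal{Y}=\mathcal{X}\times B_r(y^\star)$. Banach's theorem gives a unique $h(x)\in\overline{B}_r(y^\star)$ with $H(x,h(x))=0$. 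It also gives $h_k(x)\to h(x)$ with the geometric bound $\norm{h_k(x)-h(x)}\le 2^{-k}r$, uniform in $x$.

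\emph{Step 3 (regularity).} This is the main obstacle, since the fixed-point argument by itself yields only continuity: each $h_k$ is continuous and the convergence is uniform. For differentiability, set $y=h(x)$ and $y'=h(x')$ and expand
\begin{equation*}
0=H(x',y')-H(x,y)=\nabla_xH^{\mathrm{T}}(x'-x)+\nabla_yH^{\mathrm{T}}(y'-y)+o\bigl(\norm{x'-x}+\norm{y'-y}\bigr).
\end{equation*}
We first obtain the Lipschitz bound $\norm{y'-y}\le C\norm{x'-x}$ from the contraction estimate. To do this, write $y'-y=T_{x'}(y')-T_{x}(y)$ and use $\norm{T_{x'}(y)-T_x(y)}\le \norm{A^{-1}}\,\sup\norm{\nabla_xH}\,\norm{x'-x}$. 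With this bound, invertibility of $\nabla_yH(x,h(x))^{\mathrm{T}}$ near $(x^\star,y^\star)$ follows, possibly after shrinking $\mathcal{X}$, since the determinant is continuous. This yields
\begin{equation*}
\nabla h(x)^{\mathrm{T}}=-\nabla_yH(x,h(x))^{-\mathrm{T}}\nabla_xH(x,h(x))^{\mathrm{T}}.
\end{equation*}
The right-hand side is continuous in $x$, so $h$ is $C^1$ on $\mathcal{X}$, which completes the proof. (We read the statement's $H(y,x)$ and $\mathcal{U}$ as $H(x,y)$ and $\mathcal{X}$, respectively.)
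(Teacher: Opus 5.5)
Your proposal is correct and is essentially the standard successive-approximation (contraction-mapping) argument behind the result the paper cites from Edwards; the paper itself gives no proof. The stated recursion is precisely the Picard iteration of your contraction $T_x$ on a closed ball about $y^\star$, and $C^1$ regularity is then handled separately. One cosmetic fix is needed: choose $\delta$ so that $\norm{A^{-1}H(x,y^\star)}<r/2$ strictly. Then $\norm{h(x)-y^\star}\le 2\norm{A^{-1}H(x,y^\star)}<r$, so the graph of $h$ genuinely lies in your open set $\mathcal{Y}=\mathcal{X}\times B_r(y^\star)$.
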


\section{Basic considerations on rate-dependent friction models}\label{Sect:Gen}
The present section introduces a general methodology to derive the governing equations of rate-dependent models via the inversion of the friction characteristic. To this end, the general structure of existing rate-dependent models is first discussed in Section~\ref{ref:GenStruc}. The proposed approach is then illustrated in very general terms in Section~\ref{sect:impl}.

\subsection{General structure of rate-dependent models}\label{ref:GenStruc}

Existing first-order rate-dependent friction models available from the literature have the following form:
\begin{subequations}\label{eq:RateDep}
\begin{align}
F\ped{b} & = f(\dot{z},z,v), \label{eq:FbRate} \\
\dot{z} & = h(z,v), &&t \in (0,T), \label{eq:FbRateZ}
\end{align}
\end{subequations}
where $z \in \mathbb{R}$ represents the frictional state, $v \in \mathbb{R}$ denotes the relative velocity between the contacting bodies, and $F\ped{b} \in \mathbb{R}$ is the friction force. Equation~\eqref{eq:FbRate} is a (rate-dependent) algebraic relationship postulating the friction force as a function of the variable $z$, its derivative $\dot{z}$, and the relative velocity input $v$;~\eqref{eq:FbRateZ} is an \emph{ordinary differential equation} (ODE) governing the dynamics of the frictional variable $z$. The functions $f:\mathbb{R}^3 \mapsto \mathbb{R}$ and $h : \mathbb{R}^2 \mapsto \mathbb{R}$ are often nonlinear, with $h(\cdot,\cdot)$ possibly discontinuous to capture pre-sliding and gross sliding behaviors. The formulation~\eqref{eq:RateDep} may also be generalized to the case where the scalar frictional variable $z$ is replaced by a vector of variables, to account, for example, for nonlocal hysteretical effects; this is however beyond the scope of the present paper. In the simplest scalar case, $z$ is usually interpreted as the deflection of a bristle element attached to one of the two contacting bodies in relative motion, as discussed more explicitly in Section~\ref{sect:impl}.

Rate-dependent heuristic friction models, as expressed in the form~\eqref{eq:RateDep}, are typically obtained by postulating empirical expressions for the functions $f(\cdot,\cdot,\cdot)$ and $h(\cdot,\cdot)$ to faithfully replicate the phenomenological behaviors observed in mechanical and mechatronic systems. In some cases, the structure of these models is further supported by drawing analogies with physics-based formulations, as it happens, for instance, for the GMS models. In fact, as elucidated in Section~\ref{sect:impl}, first-order rate-dependent models can be directly derived as simplified approximations to more detailed physical descriptions, where rheological equations for the bristle element are combined with suitably designed friction curves.

\subsection{Physical derivation of rate-dependent models}\label{sect:impl}
The governing equations of rate-dependent friction models may be derived by mixing the right doses of physical intuition and algebraic dexterity, providing a formal justification for many of the first-order dynamic equations encountered in the literature. To this end, it is profitable to consider the situation illustrated schematically in Figure~\ref{fig:LumpModel}: a rigid body travels with relative velocity $v$ with respect to a rigid substrate. To the lower boundary of the upper body, deformable bristles are attached, whose deflection is denoted by $z$. The total sliding velocity between the tip of the bristle and the lower body reads
\begin{align}
v\ped{s}(\dot{z},v) = v- \dot{z}. 
\end{align}

\begin{figure}
\centering
\includegraphics[width=0.7\linewidth]{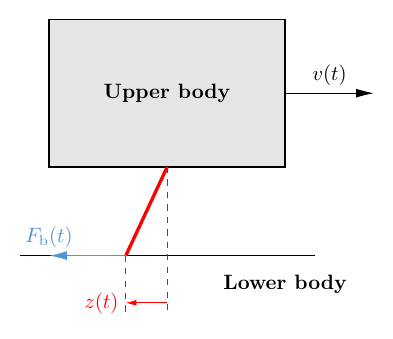} 
\caption{A schematic representation of the friction model.}
\label{fig:LumpModel}
\end{figure}

Neglecting inertial phenomena, the bristle force $F\ped{b} = f(\dot{z},z,v)$, given by~\eqref{eq:FbRate}, must oppose the friction force exerted on the tip of the bristle, which may be generally modeled as
\begin{align}\label{eq:ffric}
F\ped{r}\bigl(v\ped{s}(\dot{z},v)\bigr) = \dfrac{v\ped{s}(\dot{z},v)}{\abs{v\ped{s}(\dot{z},v)}_\varepsilon}\mu\bigl(v\ped{s}(\dot{z},v)\bigr)p,
\end{align}
where $\mu : \mathbb{R} \mapsto [\mu\ped{min},\infty)$, with $\mu\ped{min}\in \mathbb{R}_{>0}$, denotes the friction coefficient, possibly accounting for the Stribeck effect and viscous friction, and $p \in \mathbb{R}_{>0}$ is the normal force acting on the bristle. For instance, adopting a generalized Coulomb friction model, an expression for $\mu(v\ped{s})$ could be
\begin{align}\label{eq:fViscOOO}
\mu(v\ped{s}) = \mu\ped{d} + (\mu\ped{s}-\mu\ped{d})\eu^{-(\abs{v\ped{s}}/v\ped{S})^\delta} + \mu\ped{v}(v\ped{s}),
\end{align}
where $\mu\ped{d}, \mu\ped{s} \in \mathbb{R}_{>0}$ denote the \emph{dynamic} and \emph{static friction coefficient}, respectively, $v\ped{S}\in \mathbb{R}_{\geq 0}$ is the \emph{Stribeck velocity}, $\delta \in \mathbb{R}_{\geq 0}$ the \emph{Stribeck exponent}, and $\mu\ped{v} \in C^1(\mathbb{R};\mathbb{R}_{\geq 0 })$ is the \emph{viscous friction} coefficient. Finally, $\abs{\cdot}_\varepsilon \in C^0(\mathbb{R};\mathbb{R}_{\geq 0})$, with $\varepsilon \in \mathbb{R}_{\geq0}$, is a regularization of the absolute value $\abs{\cdot}$ for $\varepsilon\in \mathbb{R}_{>0}$, often converging uniformly to $\abs{\cdot}$ in $C^0(\mathbb{R};\mathbb{R}_{\geq 0})$ for $\varepsilon \to 0$ (e.g., $\abs{v}_\varepsilon= \sqrt{v^2 +\varepsilon}$), and with $\abs{\cdot}_\varepsilon \in C^1(\mathbb{R};\mathbb{R}_{\geq 0})$ for $\varepsilon \in \mathbb{R}_{>0}$. In the following, the regularization with $\varepsilon \in \mathbb{R}_{>0}$ is introduced to streamline the analysis and avoid excessively technical arguments, but also because the adoption of regularized friction models is extensive in mechanical engineering practice \cite{Rill,FloresReg}.

Two possible non-smooth trends for the friction force $F\ped{r}(v\ped{s})$ are illustrated in Figure~\ref{fig:muCurv}. Generally, the friction coefficient $\mu(\cdot)$ may be postulated to be a discontinuous function, whereas this paper assumes for simplicity $\mu\in C^1(\mathbb{R};[\mu\ped{min},\infty))$.
\begin{figure}
\centering
\includegraphics[width=1\linewidth]{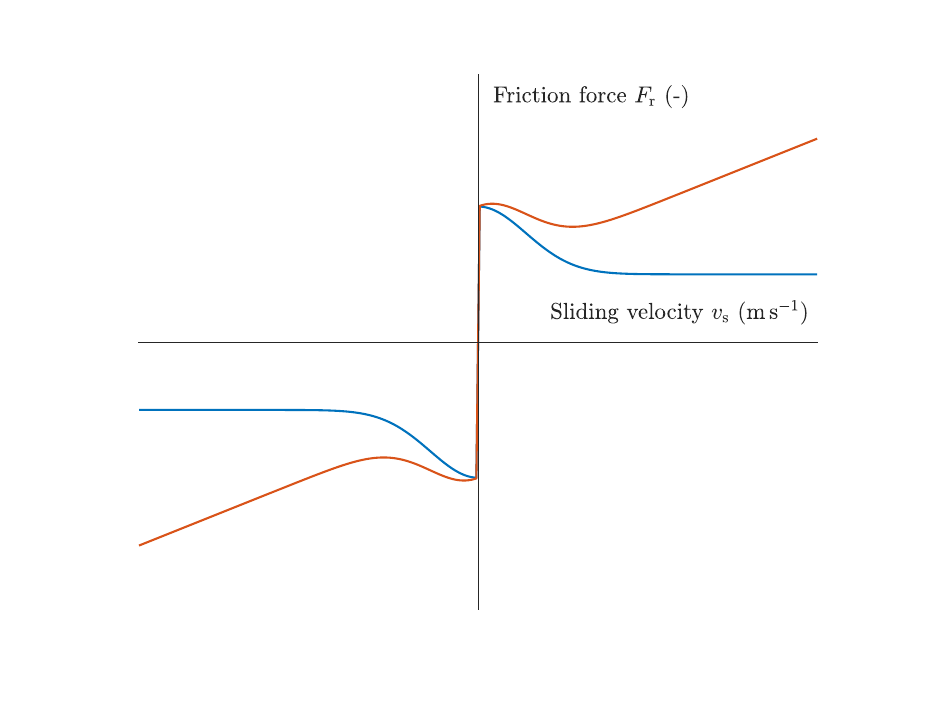} 
\caption{Two possible friction curves: generalized Coulomb (blue line), and generalized Coulomb with viscous friction (orange line).}
\label{fig:muCurv}
\end{figure}

In any case, equating~\eqref{eq:FbRate} and~\eqref{eq:ffric} yields
\begin{align}\label{eq:nonlinImplGen}
\begin{split}
H(\dot{z}, z,v) = f(\dot{z},z,v)- & \dfrac{F\ped{r}\bigl(v\ped{s}(\dot{z},v)\bigr)}{p} = 0, \quad t \in (0,T),
\end{split}
\end{align}
which is an implicit nonlinear ODE for the bristle dynamics $z(t)$. In the sliding regime, where $\abs{\dot{z}} \ll \abs{v}$, and for sufficiently smooth $H(\cdot,\cdot,\cdot)$,~\eqref{eq:nonlinImplGen} may be approximated by invoking Theorem~\ref{thm:Theorem1} with $x \triangleq (z, v)$ and $y\triangleq \dot{z}$, which gives
\begin{align}\label{eq:zDotderGen}
\dot{z}_{k+1} = \dot{z}_k - {\dpd{H(\dot{z}^\star,z^\star, v^\star)}{\dot{z}}}^{-1}H(\dot{z}_k, z, v), \quad k \in \mathbb{N}_0. 
\end{align}
Since $H(\cdot,\cdot,\cdot)$ does not need to be smooth, the following approximation is however adopted: 
\begin{align}\label{eq:approxDfGen}
\begin{split}
\dpd{H(\dot{z}, z, v)}{\dot{z}} &  =\dpd{f(\dot{z},z,v)}{\dot{z}} -\dfrac{1}{p}\dod{F\ped{r}\bigl(v\ped{s}(\dot{z},v)\bigr)}{v\ped{s}}\dpd{v\ped{s}(\dot{z},v)}{\dot{z}} \\
& \approx  \dpd{f(\dot{z},z,v)}{\dot{z}}+ \dfrac{1}{p}\dfrac{F\ped{r}(\dot{z},v)}{v\ped{s}(\dot{z},v)}   \\
& =\dpd{f(\dot{z},z,v)}{\dot{z}}+ \dfrac{\mu\bigl(v\ped{s}(\dot{z},v)\bigr)}{\abs{v\ped{s}(\dot{z},v)}_\varepsilon}.
\end{split}
\end{align}
The approximation committed in~\eqref{eq:approxDfGen} is informally legitimated by the fact that the friction force $F\ped{r}(v\ped{s})$ may exhibit a sharp discontinuity for $v\ped{s} = 0$ \cite{Rill}.
Supposing that there exists a unique $z^\star = z^\star(v)$ solving~\eqref{eq:nonlinImplGen} for $\dot{z}^\star = 0$ and for all $v \in \mathbb{R}$, truncating at $k=1$, and combining~\eqref{eq:zDotderGen} and~\eqref{eq:approxDfGen} with $(\dot{z}^\star, z^\star ,v^\star) = (0,z^\star(v),v)$ provides, for an initial guess $\dot{z}_0 = 0$,
\begin{subequations}\label{eq:ODEzGen}
\begin{align}
& \dot{z}(t) = h\bigl(z(t),v(t)\bigr), \quad  t\in(0,T),\label{eq:ODEz0Gen} \\
& z(0) = z_0,
\end{align}
\end{subequations}
where
\begin{align}
h(z,v) = -\dfrac{\abs{v}_\varepsilon}{g(v)}\bigl( f(0,z,v)-\mu(v)\sgn_\varepsilon(v)\bigr),
\end{align}
with
\begin{align}
\sgn_\varepsilon(v) \triangleq \dfrac{v}{\abs{v}_\varepsilon},
\end{align}
and
\begin{align}\label{eq:g(v)Gen}
g(v) \triangleq \dpd{f\bigl(0,z^\star(v),v\bigr)}{\dot{z}}\abs{v}_\varepsilon + \mu(v).
\end{align}
In the remainder of the paper, the family of rate-dependent models~\eqref{eq:RateDep} with bristle dynamics described by~\eqref{eq:ODEzGen} is referred to as \emph{Friction with Bristle Dynamics} (FrBD) models. Infinitely many variants of the FrBD model may be obtained by prescribing the analytical expression for the force $F\ped{b} = f(\dot{z},z,v)$. Building on a simple rheological model of the bristle element, the next Section~\ref{eq:LumpedLuGre} introduces a specific FrBD formulation inspired by the LuGre model.

Before moving to Section~\ref{eq:LumpedLuGre}, two important considerations are collected below.
\begin{remark}\label{remark:OOOOO}
The procedure outlined above provides a clear methodology to construct the functions $f(\cdot,\cdot,\cdot)$ and $h(\cdot,\cdot)$ appearing in~\eqref{eq:RateDep}: the former may be obtained from a constitutive relationship for the bristle element, whereas the latter requires the inversion of the friction curve according to~\eqref{eq:nonlinImplGen} and~\eqref{eq:zDotderGen}. 
\end{remark}

\begin{remark}\label{remark:OOOOO}
The approximated inversion~\eqref{eq:zDotderGen} is valid under the condition $\abs{\dot{z}} \ll \abs{v}$, which rules out the sticking regime for the bristle dynamics. This is in theoretical accordance with the observations reported in \cite{Olsson} concerning the LuGre model: the ODE~\eqref{eq:ODEzGen} describes the average behavior of the bristle elements, and is thus not capable of capturing instantaneous sticking effects. Hence, the mathematical derivation of~\eqref{eq:ODEzGen} provides a formal (although not mathematically rigorous) justification for the informal arguments contained in \cite{Olsson}.
\end{remark}


\section{A LuGre-like lumped FrBD model}\label{eq:LumpedLuGre}
The present section is dedicated to the derivation and analysis of a lumped, first-order FrBD model that closely resembles the LuGre one (albeit possessing some peculiar features that significantly affect its dissipative behavior). In particular, its governing equations are established in Section~\ref{sect:modelEq}. Some important mathematical properties, intimately related to the physical interpretation of the problem, are then analyzed in Section~\ref{sect:MathematicalProp}. Finally, a linearized version of the model is delivered in Section~\ref{sect:linearLump}.

\subsection{Model equations and solution}\label{sect:modelEq}
This section introduces the governing equations of the novel LuGre-like FrBD model, and discusses the salient properties of its solution. Specifically, the mathematical derivation of the model is conducted in Section~\ref{sect:der}, whereas Section~\ref{ect:wellP} provides both qualitative and quantitative results concerning its well-posedness.

\subsubsection{Model equations}\label{sect:der}

The specific structure of the ODE~\eqref{eq:ODEzGen} may be deduced by opportunely postulating an analytical expression for the bristle force $F\ped{b} = f(\dot{z},z,v)$. In this paper, a viscoelastic constitutive relationship is assumed for the bristle element, so that the corresponding tangential force may be deduced as
\begin{align}\label{eq:fbrist}
F\ped{b} = f(\dot{z},z,v) =\mu\ped{b}(\dot{z},z,v)p =  \bigl(\sigma_0(v)z + \sigma_1(v) \dot{z}\bigr)p, 
\end{align}
where $\sigma_0 \in C^1(\mathbb{R};\mathbb{R}_{\geq 0})$ denotes the \emph{normalized micro-stiffness}, $\sigma_1 \in C^1(\mathbb{R}; \mathbb{R}_{\geq 0})$ the \emph{normalized micro-damping}, and $\mu\ped{b} \in C^1(\mathbb{R}^3;\mathbb{R})$ a \emph{virtual friction coefficient}, defined in analogy with the LuGre model \cite{LuGreDistr}. It is worth observing that~\eqref{eq:fbrist} does not contain the viscous contribution that instead appears in the LuGre formulation. In fact, according to the analysis conducted in Section~\ref{sect:impl}, the viscous term should be more realistically accounted for in the definition of the friction coefficient \cite{Rill,Trib1}.

Consequently, proceeding as explained in Section~\ref{sect:impl} yields the following ODE for the bristle dynamics:
\begin{subequations}\label{eq:ODEz}
\begin{align}
& \dot{z}(t) = -\dfrac{\sigma_0(v)\abs{v(t)}_\varepsilon}{g\bigl(v(t)\bigr)}z(t) +\dfrac{\mu\bigl(v(t)\bigr)}{g\bigl(v(t)\bigr)}v(t), \quad  t\in(0,T),\label{eq:ODEz0} \\
& z(0) = z_0,
\end{align}
\end{subequations}
where the function $g(\cdot)$ in~\eqref{eq:g(v)Gen} reads explicitly
\begin{align}\label{eq:g(v)}
g(v) \triangleq \sigma_1(v)\abs{v}_\varepsilon +\mu(v).
\end{align}
Apart from the viscous contribution already included in~\eqref{eq:fViscOOO},~\eqref{eq:ODEz} and~\eqref{eq:g(v)} differ from the LuGre formulation by incorporating the additional damping-related term $\sigma_1(v)\abs{v}_{\varepsilon}$ into~\eqref{eq:g(v)}. 

Moreover, following \cite{LuGreDistr}, the virtual friction coefficient may be restated more conveniently as
\begin{align}
\begin{split}
\mu\ped{b}(t) & = \sigma_0\bigl(v(t)\bigr)z(t) + \sigma_1\bigl(v(t)\bigr)\dot{z}(t) \\
& = \bar{\sigma}_0\bigl(v(t)\bigr)z(t) + \bar{\sigma}_2\bigl(v(t)\bigr)v(t),
\end{split}
\end{align}
with
\begin{subequations}\label{eq:mu}
\begin{align}
\bar{\sigma}_0(v) &\triangleq \sigma_0(v)\biggl( 1-\dfrac{\sigma_1(v)\abs{v}_\varepsilon}{g(v)}\biggr), \\
\bar{\sigma}_2(v) &\triangleq \sigma_1(v)\dfrac{\mu(v)}{g(v)}.
\end{align}
\end{subequations}
The output of the ODE system~\eqref{eq:ODEz} may be finally defined as
\begin{align}\label{eq:outputLump}
F\ped{b}(t) = \mu\ped{b}(t)p.
\end{align}

Before proceeding with the model analysis, a consideration is reported below.
\begin{remark}
The coefficients $\sigma_0(v)$ and $\sigma_1(v)$ in~\eqref{eq:fbrist} have been assumed to depend on the relative velocity $v$; this is in analogy with the LuGre model, where the micro-damping term is modeled as a function of $v$ to ensure passivity. However, nonlinear stiffness and damping coefficients may be more generally postulated as functions of $\dot{z}$, $z$, and $v$. For instance, if $\sigma_1(\dot{z},v)$ satisfies $\sigma_1(0,v) = 0$, then the FrBD model~\eqref{eq:ODEz}-\eqref{eq:outputLump} reduces exactly to the LuGre one, but with a nonlinear micro-damping coefficient that depends also on $\dot{z}$, in addition to $v$. Of course, both the FrBD description~\eqref{eq:ODEz}-\eqref{eq:outputLump} and the LuGre one become formally equivalent to the Dahl model for $\sigma_1(\dot{z},v) = 0$.
\end{remark}

\subsubsection{Well-posedness and solution}\label{ect:wellP}
The FrDB model~\eqref{eq:ODEz}-\eqref{eq:outputLump} comprises a nonlinear ODE, whose well-posedness may be established using standard arguments. For the sake of completeness, existence and uniqueness results for the solution of~\eqref{eq:ODEz} are collected in Theorem~\ref{thm:Mild0} below.

\begin{theorem}[Existence and uniqueness of solutions]\label{thm:Mild0}
For all inputs $v \in C^0([0,T];\mathbb{R})$ and ICs $\mathbb{R} \ni z_0 \triangleq z(0)$, the ODE~\eqref{eq:ODEz} admits a unique solution $z \in C^1([0,T];\mathbb{R})$.

\begin{proof}
The result immediately follows from Theorem 2.3 in \cite{Khalil} (see, e.g., Property 3.1 in \cite{Olsson}).
\end{proof}
\end{theorem}
\begin{remark}[Global existence and uniqueness]\label{remark:1jdee}
For inputs $v \in C^0(\mathbb{R}_{\geq 0};\mathbb{R})\cap L^\infty(\mathbb{R}_{\geq 0};\mathbb{R})$, the solution obtained from Theorem~\ref{thm:Mild0} is global, and the interval $[0,T]$ may be taken as $\mathbb{R}_{\geq 0}$ (i.e., $T = \infty$).
\end{remark}

Theorem~\ref{thm:Mild0} and Remark~\ref{remark:1jdee} consider the ODE~\eqref{eq:ODEz} in isolation. An important consideration concerning its possible interconnection with other ODE systems is formalized below.
\begin{remark}
Compared to the standard LuGre formulation, modeling the functions $\mu(\cdot)$ and $g(\cdot)$ according to~\eqref{eq:fViscOOO} and~\eqref{eq:g(v)} may have important consequences on the establishment of existence and uniqueness results for the equations governing the dynamics of mechanical systems with friction. For instance, for constant $\sigma_0(v) = \sigma_0$ and $\sigma_1(v) = \sigma_1$, the term $\sigma_0\abs{v}_\varepsilon/g(v)$ in~\eqref{eq:ODEz} becomes uniformly bounded, which could ensure well-posedness where the LuGre model would normally fail. This modeling approach also ensures passivity virtually for every combination of parametrizations (Lemma~\ref{lemma:DissF}), as opposed to the LuGre formulation.
\end{remark}

Whereas Theorem~\ref{thm:Mild0}only provides a qualitative result, the solution to~\eqref{eq:ODEz} may be recovered explicitly regarding the ODE~\eqref{eq:ODEz0} as a linear time-varying system. Accordingly, 
\begin{align}
z(t) = \Phi(t,0)z_0 + \int_0^t \Phi\bigl(t,t^\prime)\dfrac{\mu\bigl(v(t^\prime)\bigr)}{g\bigl(v(t^\prime)\bigr)}v\bigl(t^\prime\bigr) \dif t^\prime, \quad t \in [0,T],
\end{align}
with
\begin{align}
\Phi(t,\tilde{t}) & \triangleq \exp\Biggl(-\int_{\tilde{t}}^t \dfrac{\sigma_0\bigl(v(t^\prime)\bigr)\abs{v(t^\prime)}_\varepsilon}{g\bigl(v(t^\prime)\bigr)}\dif t^\prime\Biggr).
\end{align}
The corresponding stationary solution reads obviously
\begin{align}\label{eq:ssSolLump}
z = \sgn_\varepsilon(v)\dfrac{\mu(v)}{\sigma_0(v)},
\end{align}
and coincides with that of the original equation~\eqref{eq:nonlinImplGen}. Not surprisingly, it also coincides -- at least formally -- with the stationary solution of the LuGre friction model, which is intelligently constructed \emph{ad hoc} to yield an identical expression to~\eqref{eq:ssSolLump} in steady-state conditions. In the present case, however, the formula~\eqref{eq:ssSolLump} has been deduced \emph{a posteriori}, and is grounded on a clear, first-order approximation of a physical model for the bristle dynamics. Starting with~\eqref{eq:nonlinImplGen}, the stationary force may also be inferred to match the friction characteristic:
\begin{align}\label{eq:ssFOrceLump}
F\ped{b} = \sgn_\varepsilon(v)\mu(v)p = F\ped{r}\bigl(v\ped{s}(0,v)\bigr),
\end{align}
which was expected from~\eqref{eq:fbrist} and~\eqref{eq:ffric}. 


\subsection{Mathematical properties}\label{sect:MathematicalProp}

The present section investigates some important mathematical features of the FrDB model~\eqref{eq:ODEz}-\eqref{eq:outputLump}, including \emph{stability} and \emph{passivity}. These are analyzed in Sections~\ref{sect:stability} and~\ref{sect:diss}, respectively.

\subsubsection{Stability}\label{sect:stability}
Intuitively, the bristle deflection $z(t)$ should be bounded when the two contacting bodies are subjected to finite relative velocities. Lemma~\ref{lemma:stab} formalizes this concept from a mathematical viewpoint.

\begin{lemma}[Stability]\label{lemma:stab}
Suppose that $C^0(\mathbb{R};[\bar{\mu}\ped{min}, \bar{\mu}\ped{max}]) \ni \bar{\mu} \triangleq \dfrac{\mu}{\sigma_0}$, with $\bar{\mu}\ped{min}, \bar{\mu}\ped{max} \in \mathbb{R}_{>0}$, and that $\abs{\sgn_\varepsilon(\cdot)} \leq 1$. Then, for all inputs $v\in C^0(\mathbb{R}_{\geq 0};\mathbb{R})$ and ICs $\abs{z_0}\leq \bar{\mu}\ped{max} $, the ODE~\eqref{eq:ODEz} satisfies $\abs{z(t)}\leq \bar{\mu}\ped{max} $ for all $t\in \mathbb{R}_{\geq 0}$.

\begin{proof}
See Property 3.2 in \cite{Olsson}.
\end{proof}
\end{lemma}
Clearly, under the same assumptions of Lemma~\ref{lemma:stab}, it is also possible to prove the stability of the output $F\ped{b}(t)$. The fact that bounded deflections produce bounded frictional forces is eventually trivial for the lumped version of the LuGre-like FrDB model, and therefore not formally investigated in the present paper. As mentioned in Section~\ref{sect:stability}, the same implication is not necessarily valid for its distributed counterpart.

\subsubsection{Dissipativity and passivity}\label{sect:diss}
When it comes to a friction model, two extremely desirable mathematical properties are \emph{dissipativity} and \emph{passivity}. In particular, the latter intimately pertains to the physical domain: its interpretation is that the frictional force generated by the bristle dynamics should dissipate energy over time. Concerning the FrDB model described by~\eqref{eq:ODEz}-\eqref{eq:outputLump} (but more generally~\eqref{eq:FbRate} and~\eqref{eq:ODEzGen}-\eqref{eq:g(v)Gen}), the notions of dissipativity and passivity are enounced according to Definition~\ref{def:dissip}. 
\begin{definition}[Dissipativity and passivity]\label{def:dissip}
The system~\eqref{eq:ODEz}-\eqref{eq:outputLump} is called \emph{dissipative} if, for all inputs $v \in C^0([0,T];\mathbb{R})$ and ICs $z_0 \in \mathbb{R}$, there exist a supply rate $w : \mathbb{R}^2\mapsto \mathbb{R}$ and storage function $W : \mathbb{R} \mapsto \mathbb{R}_{\geq 0}$ such that
\begin{align}\label{eq:Fvres0P}
\begin{split}
\int_0^t w\bigl(F\ped{b}(t^\prime),v(t^\prime)\bigr) \dif t^\prime \geq W\bigl(z(t)\bigr)& - W\bigl(z_0\bigr), \quad t \in [0,T].
\end{split}
\end{align}
It is called \emph{passive} if $w(F\ped{b}(t),v(t)) = F\ped{b}(t)v(t)$.
\end{definition}

\begin{lemma}[Passivity]\label{lemma:DissF}
Suppose that $\sigma_0(v) = \sigma_0 \in \mathbb{R}_{>0}$.
Then, the FrBD model~\eqref{eq:ODEz}-\eqref{eq:outputLump} is passive with storage function
\begin{align}\label{eq:VdissF0}
W\bigl(z(t)\bigr) \triangleq \dfrac{\Sigma_0}{2}z^2(t),
\end{align}
where the \emph{velocity-independent micro-stiffness} coefficient is defined as $\mathbb{R}_{>0} \ni \Sigma_0 \triangleq \sigma_0 p$.

\begin{proof}
Using the ODE~\eqref{eq:ODEz} provides
\begin{align}\label{eq:firstBound}
\begin{split}
F\ped{b}(t)v(t) & = \Sigma_0\Biggl(1 - \dfrac{\sigma_1\bigl(v(t)\bigr)\abs{v(t)}_\varepsilon}{g\bigl(v(t)\bigr)}\Biggr)z(t)v(t) \\
& \quad + p\sigma_1\bigl(v(t)\bigr)\dfrac{\mu\bigl(v(t)\bigr)}{g\bigl(v(t)\bigr)}v^2(t).
\end{split}
\end{align}
Adding and subtracting $\dot{W}(z(t))$ as in~\eqref{eq:VdissF0}, and recalling~\eqref{eq:g(v)}, yields
\begin{align}\label{eq:firstBound2}
\begin{split}
& F\ped{b}(t)v(t) -\dot{W}\bigl(z(t)\bigr) = \Sigma_0\sigma_0\dfrac{\abs{v(t)}_\varepsilon}{g\bigl(v(t)\bigr)}z^2(t) \\
& \qquad + p\sigma_1\bigl(v(t)\bigr)\dfrac{\mu\bigl(v(t)\bigr)}{g\bigl(v(t)\bigr)}v^2(t), \quad t \in (0,T).
\end{split}
\end{align}
Hence,
\begin{align}\label{eq:diss2}
F\ped{b}(t)v(t)  \geq \dot{W}\bigl(z(t)\bigr), \quad t \in (0,T).
\end{align}
Integrating the above~\eqref{eq:diss2} yields~\eqref{eq:Fvres0P} with $W(z(t))$ defined according to~\eqref{eq:VdissF0}.
\end{proof}
\end{lemma}
Provided that the micro-stiffness coefficient is constant, Lemma~\ref{lemma:DissF} establishes the passivity of~\eqref{eq:ODEz}-\eqref{eq:outputLump} for any admissible parametrization. This is a modeling advantage compared to the LuGre formulation, where passivity requires postulating a velocity-dependent micro-damping, which appears to be a rather artificial modification. 

The next Section~\ref{sect:linearLump} introduces two linearized versions of equations~\eqref{eq:ODEz}-\eqref{eq:outputLump}. 

\subsection{Linearized model}\label{sect:linearLump}

The present section is dedicated to the derivation of a linearized version of the FrBD model~\eqref{eq:ODEz}-\eqref{eq:outputLump}. To this end, the input and states are decomposed respectively as $v(t) = v^\star + \tilde{v}(t)$ and $z(t) = z^\star + \tilde{z}(t)$, where $v^\star$, $z^\star \in \mathbb{R}$ represent a constant input and the corresponding stationary solution, and $\tilde{v}(t)$, $\tilde{z}(t)\in \mathbb{R}$ indicate small perturbations around these. Accordingly, linearization of~\eqref{eq:PDEfric} yields
\begin{subequations}\label{eq:LuGreDistrDelta0W}
\begin{align}
\begin{split}
& \dot{\tilde{z}}(t) = -\dfrac{\sigma_0(v^\star)\abs{v^\star}_\varepsilon}{g(v^\star)}\tilde{z}(t) + H_1(z^\star,v^\star)\tilde{v}(t), \quad t \in (0,T), \label{eq:LugreODEDelta0}
\end{split} \\
& \tilde{z}(0) = \tilde{z}_{0}, 
\end{align}
\end{subequations}
and the corresponding linearized friction coefficient reads
\begin{align}\label{eq:mu_delta0}
\tilde{\mu}\ped{b}(t) \triangleq \bar{\sigma}_0(v^\star)\tilde{z}(t) + H_2(z^\star,v^\star)\tilde{v}(t),
\end{align}
where \begin{small}
\begin{subequations}
\begin{align}
\begin{split}
 H_1(z,v) & \triangleq \dfrac{\mu(v)}{g(v)}\biggl(1-\dfrac{v}{g(v)}\dod{g(v)}{v}\biggr)+\dfrac{v}{g(v)}\dod{\mu(v)}{v}\\
& \quad - \dfrac{\sigma_0(v)z}{g(v)}\biggr(\dod{\abs{v}_\varepsilon}{v} - \dfrac{\abs{v}_\varepsilon}{g(v)}\dod{g(v)}{v}\biggl) -\dod{\sigma_0(v)}{v}\dfrac{\abs{v}_\varepsilon}{g(v)}, 
\end{split}\\
\begin{split}
H_2(z,v) & \triangleq \dod{\bar{\sigma}_0(v)}{v}z + \dod{\bar{\sigma}_2(v)}{v}v + \bar{\sigma}_2(v),
\end{split}
\end{align}
\end{subequations} \end{small}
and \begin{small}
\begin{subequations}
\begin{align}
\begin{split}
 \dod{\bar{\sigma}_0(v)}{v} & = \dod{\sigma_0(v)}{v}\biggl( 1-\dfrac{\sigma_1(v)\abs{v}_\varepsilon}{g(v)}\biggr)-\dfrac{\sigma_0(v)\abs{v}_\varepsilon}{g(v)}\dod{\sigma_1(v)}{v}\\
& \quad-\dfrac{\sigma_0(v)\sigma_1(v)}{g(v)}\biggl(\dod{\abs{v}_\varepsilon}{v}-\dfrac{\abs{v}_\varepsilon}{g(v)}\dod{g(v)}{v}\biggr), 
\end{split} \\
\begin{split}
\dod{\bar{\sigma}_2(v)}{v} & = \dfrac{\sigma_1(v)}{g^2(v)}\biggl(g(v)\dod{\mu(v)}{v}-\mu(v)\dod{g(v)}{v}\biggr) + \dfrac{\mu(v)}{g(v)}\dod{\sigma_1(v)}{v}.
\end{split}
\end{align}
\end{subequations} \end{small}
The corresponding linearized force is formally given by
\begin{align}\label{eq:Flin10}
\tilde{F}\ped{b}(t) = \tilde{\mu}\ped{b}(t)p.
\end{align}
By visual inspection, it is straightforward to conclude that, as for its nonlinear counterpart, the linearized model described by~\eqref{eq:LuGreDistrDelta0W}-\eqref{eq:Flin10} also enjoys nice stability properties. This concludes the analysis of the lumped version of the FrBD model. The next Section~\ref{sect:Distr} explores its extension to the distributed case.

\section{A LuGre-like distributed FrBD model}\label{sect:Distr}
When modeling rolling contact systems, it is necessary to consider the distributed nature of the frictional phenomena that govern rolling and sliding motions. This section is thus dedicated to extending the LuGre-like FrBD model introduced in Section~\ref{eq:LumpedLuGre} to the distributed case. Such a generalization is achieved in a relatively straightforward manner, but requires more sophisticated mathematical tools to be handled. Specifically, the governing equations of the distributed variant are detailed in Section~\ref{sect:MESdistr}, whereas the notions of stability and passivity are formalized and investigated mathematically in Section~\ref{sect:NMEPdistr}. Finally, in a direct analogy to what done previously, a linearized version of the distributed FrBD model is delivered in Section~\ref{sect:LinearDsitr}.

\subsection{Model equations and solution}\label{sect:MESdistr}
The governing equations of the distributed LuGre-like FrBD model are introduced in Section~\ref{sect:DerDistr}, whilst well-posedness properties are briefly discussed in Section~\ref{sect:WellpDistr}, where a closed-form solution is also derived explicitly.

\subsubsection{Model equations}\label{sect:DerDistr}

\begin{figure}
\centering
\includegraphics[width=0.8\linewidth]{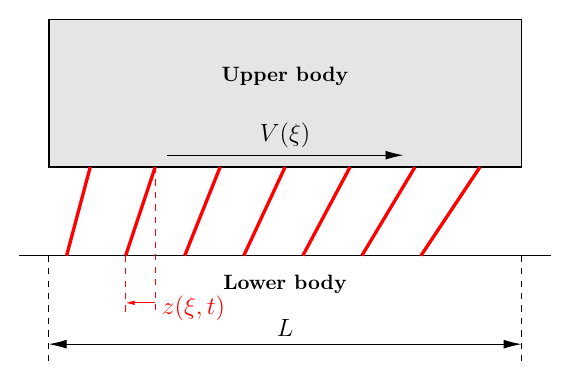} 
\caption{A schematic representation of the distributed FrBD friction model.}
\label{fig:DistrModel}
\end{figure}

When the frictional rolling contact is distributed inside a region of length $L \in \mathbb{R}_{>0}$, the ODE~\eqref{eq:ODEzGen} becomes a PDE. Such a PDE may be derived directly from~\eqref{eq:ODEzGen} by replacing the total Lagrangian derivative with the Eulerian one:
\begin{align}\label{eq:ODEEEEE}
\dod{z(\xi,t)}{t} = \dpd{z(\xi,t)}{t} + V(\xi)\dpd{z(\xi,t)}{\xi},
\end{align}
where $z(\xi,t)\in \mathbb{R}$ represents the distributed state, and $V \in C^1([0,1];\mathbb{R}_{>0})$ denotes the transport velocity. This approach is very general and permits recovering a family of distributed FrBD models from their lumped counterparts. It is worth noting that, if $V(\xi) = 0$,~\eqref{eq:ODEEEEE} in conjunction with~\eqref{eq:ODEzGen} yields again an ODE, where the variable $\xi \in [0,L]$ may be regarded as a parameter. Similarly, all model coefficients may also be treated as spatially varying. A schematic of the distributed friction model is illustrated in Figure~\ref{fig:DistrModel}. 

Considering specifically the (possibly regularized) LuGre-like FrBD model with bristle dynamics~\eqref{eq:ODEz}, the following semilinear PDE, rescaled on a unitary domain, may be deduced for the distributed case:
\begin{subequations}\label{eq:PDEfric}
\begin{align}
\begin{split}
&\dpd{z(\xi,t)}{t} + V(\xi)\dpd{z(\xi,t)}{\xi}  = -\dfrac{\sigma_0\bigl(v(t),\xi\bigr)\abs{v(t)}_\varepsilon}{g\bigl(v(t),\xi\bigr)}z(\xi,t) \\
&\qquad \qquad  \qquad +\dfrac{\mu\bigl(v(t)\bigr)}{g\bigl(v(t),\xi\bigr)}v(t),\quad (\xi,t) \in(0,1)\times(0,T),
\end{split}\label{eq:PDEfric1}\\
 & z(0,t) = 0,\quad  t\in(0,T), \label{eq:PDEfricBC}\\
& z(\xi,0) = z_0(\xi), \quad \xi \in (0,1). \label{eq:PDEfricIC}
\end{align}
\end{subequations}
with
\begin{align}\label{eq:gVar}
g(v, \xi) \triangleq \sigma_1(v,\xi)\abs{v}_\varepsilon +\mu(v).
\end{align}
Equation~\eqref{eq:PDEfric} may be used to describe the relative deformation between two rolling contact bodies, such as tire and road, wheel and rail, but also bearings and machine elements. 
In~\eqref{eq:PDEfric} and~\eqref{eq:gVar}, $\sigma_0 \in C^1(\mathbb{R}\times[0,1];\mathbb{R}_{\geq 0})$ denotes the spatially varying normalized micro-stiffness coefficient, and $\sigma_1 \in C^1(\mathbb{R}\times[0,1];\mathbb{R}_{\geq 0})$ the spatially-varying normalized micro-damping coefficient; for notational convenience, the input $v(t)$ is supposed to be independent of the spatial coordinate\footnote{As also explained in Section~\ref{sect:diss}, none of the paper's conclusion is significantly affected by this assumption, whereas the dependency of other model parameters on the spatial coordinate may have profound implications.}, albeit the generalization to a spatially varying velocity is straightforward. The boundary condition (BC)~\eqref{eq:PDEfricBC} prescribes that the bristles should enter the contact area in an undeformed state: this is a standard assumption in the theory of elastic rolling contact (where it derives from the continuity of the stresses), but has been historically extended also to viscoelastic rolling contact problems, as testified by many contributions dealing with the distributed LuGre formulation. 

Moreover, exactly as for its lumped counterpart, the virtual friction coefficient may be written as
\begin{align}\label{eq:muVDistr}
\begin{split}
\mu\ped{b}(\xi,t) & = \sigma_0\bigl(v(t),\xi\bigr)z(\xi,t) + \sigma_1\bigl(v(t),\xi\bigr)\dod{z(\xi,t)}{t} \\
& = \bar{\sigma}_0\bigl(v(t),\xi\bigr)z(\xi,t) + \bar{\sigma}_2\bigl(v(t),\xi\bigr)v(t),
\end{split}
\end{align}
with
\begin{subequations}\label{eq:muXi}
\begin{align}
\bar{\sigma}_0(v,\xi) &\triangleq \sigma_0(v,\xi)\biggl( 1-\sigma_1(\xi)\dfrac{\abs{v}_\varepsilon}{g(v)}\biggr), \\
\bar{\sigma}_2(v,\xi) &\triangleq \sigma_1(v,\xi)\dfrac{\mu(v)}{g(v)}.
\end{align}
\end{subequations} 
Denoting the distributed output, that is, the infinitesimal bristle force, by
\begin{align}\label{eq:outputDistr22}
f\ped{b}(\xi,t) = L\mu\ped{b}(\xi,t)p(\xi),
\end{align}
where $p \in C^1([0,1];\mathbb{R}_{\geq 0})$ varies now over space, the finite-dimensional output of the PDE~\eqref{eq:PDEfric} may be finally defined as
\begin{align}\label{eq:outputDistr}
F\ped{b}(t) = \bigl(\mathscr{K}(z,v)\bigr)(t) = \int_0^1f\ped{b}(\xi,t)\dif \xi.
\end{align}
For what follows, it may be beneficial to identify the output of the PDE~\eqref{eq:PDEfric} as~\eqref{eq:outputDistr} or~\eqref{eq:outputDistr22}, depending on the application.

\begin{remark}
It is crucial to observe that the total time derivative is used in the definition of the virtual friction coefficient in~\eqref{eq:muVDistr}. Replacing the total time derivative with the partial one, as done by some authors for the distributed LuGre model, might introduce inconsistencies concerning the stability and dissipative behavior of the system~\eqref{eq:PDEfric}-\eqref{eq:outputDistr}, as extensively discussed in \cite{LuGreDistr}. Moreover, the partial derivative does not have a clear physical meaning, whereas the total one represents a real deformation velocity. It is also worth emphasizing that, if the total derivative is employed in~\eqref{eq:muVDistr}, for $v(t) \in \mathbb{R}$ fixed, $F\ped{b}(t) = (\mathscr{K}(z,v))(t)$ may be interpreted as a bounded operator on $L^2((0,1);\mathbb{R})$, whereas using the partial derivative would make it unbounded. 
\end{remark}

\subsubsection{Well-posedness and solution}\label{sect:WellpDistr}
Well-posedness for the PDE~\eqref{eq:PDEfric} may be proved within different functional settings. Following \cite{LuGreDistr}, the present paper focuses on the notions of \emph{classical} and \emph{mild solutions}. Existence and uniqueness results are asserted by Theorems~\ref{thm:Class} and~\ref{thm:Mild}.


\begin{theorem}[Existence and uniqueness of classical solutions]\label{thm:Class}
Suppose that $\varepsilon \in \mathbb{R}_{>0}$ in~\eqref{eq:PDEfric}. Then, for all inputs $v \in C^1([0,T];\mathbb{R})$ and initial conditions (ICs) $z_0 \in H^1((0,1);\mathbb{R})$ satisfying the BC~\eqref{eq:PDEfricBC}, the distributed friction model~\eqref{eq:PDEfric} admits a unique \emph{classical solution} $z \in C^1([0,T];L^2((0,1);\mathbb{R})) \cap C^0([0,T];H^1((0,1);\mathbb{R}))$ satisfying the BC~\eqref{eq:PDEfricBC}.

\begin{proof}
The result follows from an application of Theorem 6.1.5 in \cite{Pazy} (see also Theorem 2.1 in \cite{LuGreDistr} for additional details).
\end{proof}
\end{theorem}

\begin{theorem}[Existence and uniqueness of mild solutions]\label{thm:Mild}
For all inputs $v \in C^0([0,T];\mathbb{R})$ and initial conditions (ICs) $z_0 \in L^2((0,1);\mathbb{R})$, the distributed friction model~\eqref{eq:PDEfric} admits a unique \emph{mild solution} $z \in C^0([0,T];L^2((0,1);\mathbb{R}))$.

\begin{proof}
The result follows from an application of Theorem 6.1.2 in \cite{Pazy}.
\end{proof}
\end{theorem}
\begin{remark}[Global existence and uniqueness]
For inputs $v \in C^1(\mathbb{R}_{\geq 0};\mathbb{R})\cap L^\infty(\mathbb{R}_{\geq 0};\mathbb{R})$ and $v \in C^0(\mathbb{R}_{\geq 0};\mathbb{R})\cap L^\infty(\mathbb{R}_{\geq 0};\mathbb{R})$, respectively, the solutions obtained from Theorems~\ref{thm:Class} and~\ref{thm:Mild} are global, and the interval $[0,T]$ may be taken as $\mathbb{R}_{\geq 0}$, i.e., $T = \infty$ (see Theorem 6.1.2 in \cite{Pazy} and the accompanying commentary).
\end{remark}

Whereas the result asserted by Theorems~\ref{thm:Class} and~\ref{thm:Mild} are merely qualitative, sufficiently smooth solutions to the PDE~\eqref{eq:PDEfric} (such as classical) may be recovered in closed form using similar techniques as those employed in \cite{LuGreDistr,Aamo}. 
Indeed, the method of the characteristic lines gives
\begin{align}\label{eq:zSolEta}
\begin{split}
 & z(\xi,t)  = \int_{\max(\varpi(\xi)-t,0)}^{\varpi(\xi)} \Phi\bigl(\varpi(\xi),\eta,t\bigr) \\
& \quad \times \dfrac{\mu\bigl(\eta-\varpi(\xi)+t\bigr)}{g\bigl(\eta-\varpi(\xi)+t, \varpi^{-1}(\eta)\bigr)}v\bigl(\eta-\varpi(\xi)+t\bigr)\dif \eta \\
&\quad + \Phi\bigl(\varpi(\xi),\varpi(\xi)-t,t\bigr) z_0\Bigl(\varpi^{-1}\bigl(\max(\varpi(\xi)-t,0)\bigr)\Bigr), \\
& \qquad\qquad\qquad\qquad\qquad\qquad\quad  (\xi,t) \in [0,1]\times[0,T],
\end{split}
\end{align}
where the mapping $\varpi : [0,1]\mapsto \mathbb{R}_{\geq 0}$, defined by
\begin{align}\label{eq:PiXi}
 \varpi(\xi) \triangleq \int_0^\xi \dfrac{1}{V(\xi^\prime)}\dif \xi^\prime,
\end{align}
is monotonically increasing and invertible since $V \in C^1([0,1];\mathbb{R}_{>0})$, and
\begin{align}\label{eq:PhiTrans}
\begin{split}
\Phi(\eta,\tilde{\eta},t) & \triangleq \exp\Biggl( -\int_{\tilde{\eta}}^\eta\sigma_0\bigl(v(\eta^\prime-\eta +t),\varpi^{-1}(\eta^\prime)\bigr) \\
& \quad \times \dfrac{\abs{v(\eta^\prime-\eta +t)}_\varepsilon}{g\bigl(v(\eta^\prime-\eta +t),\varpi^{-1}(\eta^\prime)\bigr)} \dif \eta^\prime \Biggr).
\end{split}
\end{align} 
Equation~\eqref{eq:zSolEta} represents the most general expression for the bristle deflection in transient conditions.
In particular, for $V(\xi) = V$, $\sigma_0(v,\xi) = \sigma_0(v)$, and $\sigma_1(v,\xi) = \sigma_1(v)$ constant, the integrals in~\eqref{eq:zSolEta} and~\eqref{eq:PhiTrans} considerably simplify, and the stationary solution assumes the form
\begin{align}\label{eq:CD2}
z(\xi) = \sgn_\varepsilon(v)\dfrac{\mu(v)}{\sigma_0(v)}\Biggl[1-\exp\biggl( -\dfrac{\sigma_0(v)\abs{v}_\varepsilon}{V g(v)}\xi\biggr)\Biggr], \quad \xi \in [0,1],
\end{align}
which is consistent with the expression deduced according to the distributed LuGre model (with $\sigma_0(v) = \sigma_0$ and $g(v) = \mu(v)$). 

Starting with~\eqref{eq:CD2}, an analytical expression for the total frictional force may be deduced by opportunely specifying the function $p(\cdot)$.
In the literature, different shapes for the pressure distribution have been postulated. Amongst those more frequently used, there is the constant one $p(\xi) = p_0$, the exponentially decreasing one $p(\xi) = p_0\exp(-a\xi)$, and the parabolic one $p(\xi) = p_0\xi(1-\xi)$.
In particular, assuming a constant pressure distribution gives 
\begin{small}
\begin{align}\label{eq:Fss1}
\begin{split}
F\ped{b} &= \dfrac{vLp_0V\bar{\sigma}_0(v)\mu(v)g(v)}{\sigma_0^2(v) \abs{v}_\varepsilon^2} \Biggl[\dfrac{\sigma_0(v)\abs{v}_\varepsilon}{V g(v)}+\exp\biggl( -\dfrac{\sigma_0(v)\abs{v}_\varepsilon}{V g(v)}\biggr)-1\Biggr] \\
& \quad  + Lp_0\bar{\sigma}_2(v)v,
\end{split}
\end{align}\end{small}
whereas the exponential distribution yields \begin{small}
\begin{align}\label{eq:Fss2}
\begin{split}
F\ped{b} &= \dfrac{\sgn_\varepsilon(v)Lp_0V\bar{\sigma}_0(v)\mu(v)g(v)}{\sigma_0(v)\bigl(aV g(v)+\sigma_0(v)\abs{v}_\varepsilon\bigr)} \Biggl[\exp\biggl( -\dfrac{aV g(v)+\sigma_0(v)\abs{v}_\varepsilon}{V g(v)}\biggr)-1\Biggr] \\
& \quad + \dfrac{Lp_0\bigl(1-\exp(-a)\bigr)}{a}\biggl(\sgn_\varepsilon(v)\dfrac{\bar{\sigma}_0(v)\mu(v)}{\sigma_0(v)}+ \bar{\sigma}_2(v)v\biggr).
\end{split}
\end{align}\end{small}
Once again, both the expressions~\eqref{eq:Fss1} and~\eqref{eq:Fss2} are in theoretical agreement with the corresponding ones deduced using the distributed LuGre model. The relationship between the normalized steady-state bristle force $\frac{F\ped{b}}{F_z}$ and the relative velocity $\frac{v}{LV}$ (also known as \emph{slip}), where $F_z \in \mathbb{R}_{>0}$ denotes the total vertical force acting on the body, is visualized in Figure~\ref{fig:Fb}, where data collected from a tire operating in a low-friction environment \cite{Two-regime} is also displayed. From Figure~\ref{fig:Fb}, it may be concluded that the distributed model can describe with high precision the steady-state tire force as a function of the slip $\frac{v}{LV}$. 
\begin{figure}
\centering
\includegraphics[width=1\linewidth]{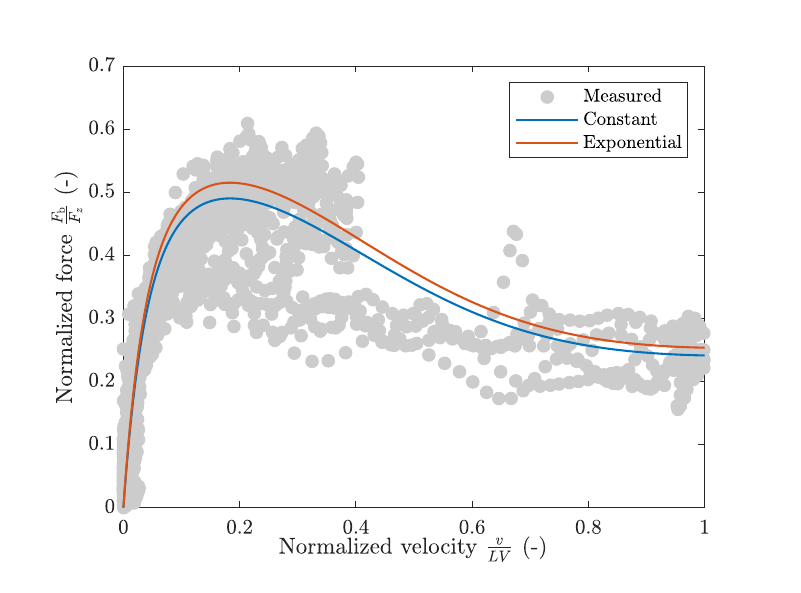} 
\caption{Normalized steady-state bristle force $\frac{F\ped{b}}{F_z}$ for the constant pressure distribution (blue line), and exponential one (orange line), for a tire operating in a low-friction environment. Model parameters as in Table~\ref{tab:param1}.}
\label{fig:Fb}
\end{figure}

Before moving to the mathematical analysis of the distributed FrBD model, a final consideration is summarized in Remark~\ref{Remark:Vs} below.

\begin{remark}\label{Remark:Vs}
In many systems of practical interest, the transport velocity does not depend on the spatial coordinate, but only on the time variable, i.e., $V = V(t)$. In such cases, owing to the assumption $V \in C^1([0,T];\mathbb{R}_{>0})$, it is possible to define a new time-like variable $\mathbb{R}_{\geq 0}\ni s\triangleq \int_0^t V(t^\prime) \dif t^\prime$ and recast the PDE~\eqref{eq:PDEfric} as a linear time-varying system with constant transport velocity. For the resulting equation, existence and uniqueness results would follow easily from similar arguments as those adopted in the proof of Theorem 2.1 in \cite{LuGreDistr}. For instance, concerning the distributed LuGre model, explicit solutions have been obtained in \cite{CarcassDyn} resorting to analogous strategies to those adopted for the derivation of~\eqref{eq:zSolEta}.
\end{remark}

\begin{table}[h!]\centering 
\caption{Model parameters}
{\begin{tabular}{|c|c|c|c|}
\hline
Parameter & Description & Unit & Value \\
\hline 
$L$ & Contact length & m & 0.1 \\
$\sigma_0$ & Normalized micro-stiffness & $\textnormal{m}^{-1}$ & 252 \\
$\sigma_1$ & Normalized micro-damping & $\textnormal{s}\,\textnormal{m}^{-1}$ &0 \\
$\sigma_2$ & Normalized viscous damping & $\textnormal{s}\,\textnormal{m}^{-1}$ &0.0018 \\
$\mu\ped{d}$ & Dynamic friction coefficient & - &0.2 \\
$\mu\ped{s}$ & Static friction coefficient & - &0.6 \\
$v\ped{S}$ & Stribeck velocity & $\textnormal{m}\,\textnormal{s}^{-1}$ &10 \\
$\delta$ & Stribeck exponent & - &2 \\
$a$ & Pressure parameter & - & 0.1 \\
$\varepsilon$ & Regularization parameter & $\textnormal{m}^2\,\textnormal{s}^{-2}$ & $0$ \\
\hline
\end{tabular} }
\label{tab:param1}
\end{table}

\subsection{Mathematical properties}\label{sect:NMEPdistr}

In analogy to what done in Section~\ref{sect:diss} for the lumped variant, the present section investigates the mathematical properties of the distributed FrDB model concerning the mild solution of~\eqref{eq:PDEfric}. Specifically, stability and passivity are analyzed in Section~\ref{sect:stability} and~\ref{sect:diss}, respectively.

\subsubsection{Stability}\label{sect:stability}
Stability is discussed concerning both the input-to-state and input-to-output behaviors. For the system~\eqref{eq:PDEfric}-\eqref{eq:outputDistr}, the notions of \emph{input-to-state} and \emph{input-to-output stability} are enounced in Definitions~\ref{def:ISS} and~\ref{def:IOS}, respectively, for classical solutions of~\eqref{eq:PDEfric}. It should be mentioned that the formalisms adopted here differ from that employed in Section~\ref{sect:stability} for the lumped counterpart of the FrBD model. This is primarily justified by the additional dynamical features possessed by the distributed formulation: essentially, the spatial dimension of the problem permits obtaining stronger stability estimates than those derived for the lumped description.

\begin{definition}[Input-to-state-stability (ISS)]\label{def:ISS}
The PDE~\eqref{eq:PDEfric} is called (uniformly) \emph{input-to-state stable} (ISS) in the spatial $L^2$-norm if, for all inputs $v \in C^1(\mathbb{R}_{\geq 0};\mathbb{R}) \cap L^\infty(\mathbb{R}_{\geq 0};\mathbb{R})$ and ICs $z_0 \in H^1((0,1);\mathbb{R})$ satisfying the BC~\eqref{eq:PDEfricBC}, there exist functions $\beta \in \mathcal{KL}$ and $\gamma \in \mathcal{K}_\infty$ such that
\begin{align}\label{eq:ISSbetaGamma}
\begin{split}
\norm{z(\cdot,t)}_{L^2((0,1);\mathbb{R})} & \leq \beta\Bigl(\norm{z_0(\cdot)}_{L^2((0,1);\mathbb{R})}, t\Bigr) \\
& \quad + \gamma\Bigl( \norm{v(\cdot)}_\infty\Bigr), \quad t \in \mathbb{R}_{\geq 0}.
\end{split}
\end{align}
\end{definition}

\begin{definition}[Input-to-output stability (IOS)]\label{def:IOS}
The PDE~\eqref{eq:PDEfric} with output~\eqref{eq:outputDistr} is called \emph{input-to-output stable} (IOS) if, for all inputs $v \in C^1(\mathbb{R}_{\geq 0};\mathbb{R}) \cap L^\infty(\mathbb{R}_{\geq 0};\mathbb{R})$ and ICs $z_0 \in H^1((0,1);\mathbb{R})$ satisfying the BC~\eqref{eq:PDEfricBC}, there exist functions $\beta \in \mathcal{KL}$ and $\alpha \in \mathcal{K}$ such that 
\begin{align}\label{eq:FBBBBBBSS}
\begin{split}
\abs{F\ped{b}(t)} & \leq \beta\Bigl(\norm{z_0(\cdot)}_{L^2((0,1);\mathbb{R})}, t\Bigr)  + \alpha\Bigl(\norm{v(\cdot)}_\infty\Bigr), \quad t\in \mathbb{R}_{\geq 0}.
\end{split}
\end{align}
\end{definition}
The above definitions, inspired from \cite{Khalil,Prieur}, are contingent on the specific system considered in the paper; a modern, exhaustive introduction to the notions of ISS and \emph{integral input-to-state stability} (iISS) for infinite-dimensional systems may be instead found in \cite{Prieur}. Concerning ISS properties, the main result is asserted below.

\begin{lemma}[Input-to-state stability (ISS)]\label{lemma:boundedness}
The PDE~\eqref{eq:PDEfric} is (uniformly) ISS in the spatial $L^2$-norm. 


\begin{proof}
Using the fact that $\dfrac{\mu}{g} \in C^0(\mathbb{R}\times[0,1];(0,1])$, an ISS estimate may be derived exactly as in Lemma 3.4 in \cite{LuGreDistr}.
\end{proof}
\end{lemma}
Lemma~\ref{lemma:boundedness} is also instrumental in showing the IOS of the model\footnote{In this context, it should be observed that, whilst this paper is mainly concerned with ISS and IOS, iISS estimates \cite{Prieur} may also be derived \cite{LuGreDistr}.}: as for the lumped model, bounded deformations should intuitively generate bounded frictional forces. In fact, combining~\eqref{eq:outputDistr} with~\eqref{eq:muXi} gives, for all $v \in C^0(\mathbb{R}_{\geq 0};\mathbb{R})\cap L^\infty(\mathbb{R}_{\geq 0};\mathbb{R})$ and $z(\cdot,t) \in L^2((0,1);\mathbb{R})$,
\begin{align}\label{eq:boundF}
\begin{split}
\abs{F\ped{b}(t)} & \leq L\norm{p(\cdot)\bar{\sigma}_0(\cdot,\cdot)}_\infty\norm{z(\cdot,t)}_{L^2((0,1);\mathbb{R})} \\
& \quad + L\norm{p(\cdot)\bar{\sigma}_2(\cdot,\cdot)}_\infty\norm{v(\cdot)}_\infty, \quad t \in \mathbb{R}_{\geq 0},
\end{split}
\end{align}
which, in conjunction with Lemma~\ref{lemma:boundedness}, implies the boundedness of $F\ped{b}(t)$ on $\mathbb{R}_{\geq 0}$, and hence IOS. This simple result is formalized in Corollary~\ref{corollary:IOS} below.
\begin{corollary}[Input-to-output stability (IOS)]\label{corollary:IOS}
The PDE~\eqref{eq:PDEfric} with output~\eqref{eq:outputDistr} is IOS.
\end{corollary}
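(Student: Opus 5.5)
The plan is to combine the pointwise output bound~\eqref{eq:boundF} with the ISS estimate of Lemma~\ref{lemma:boundedness}. Fix an input $v \in C^1(\mathbb{R}_{\geq 0};\mathbb{R})\cap L^\infty(\mathbb{R}_{\geq 0};\mathbb{R})$ and an IC $z_0 \in H^1((0,1);\mathbb{R})$ satisfying~\eqref{eq:PDEfricBC}. By Theorem~\ref{thm:Class} and the accompanying remark on global existence, the classical solution exists on $\mathbb{R}_{\geq 0}$, so $F\ped{b}(t)$ in~\eqref{eq:outputDistr} is well defined for every $t \in \mathbb{R}_{\geq 0}$.

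The first step is to check that the constants in~\eqref{eq:boundF} are finite. I would set
\begin{align*}
c_0 \triangleq L\norm{p(\cdot)\bar{\sigma}_0(\cdot,\cdot)}_\infty, \qquad c_2 \triangleq L\norm{p(\cdot)\bar{\sigma}_2(\cdot,\cdot)}_\infty,
\end{align*}
where the suprema are taken over $[0,1]$ and over the velocities actually attained by $v$.
\begin{itemize}
\item Since $\mu/g \in (0,1]$, we have $\sigma_1\abs{v}_\varepsilon/g = 1-\mu/g \in [0,1)$. Hence $0 \leq \bar{\sigma}_0 \leq \sigma_0$ and $0 \leq \bar{\sigma}_2 \leq \sigma_1$.
\item Because $v$ takes values in the compact set $[-\norm{v(\cdot)}_\infty,\norm{v(\cdot)}_\infty]$, and $p$, $\sigma_0$, $\sigma_1$ are continuous, both $c_0$ and $c_2$ are finite.
\item If uniform-in-$v$ constants are wanted, one takes the suprema over $\mathbb{R}\times[0,1]$, which is legitimate when $\sigma_0$ and $\sigma_1$ are bounded (for example, constant).
\end{itemize}
Applying Cauchy--Schwarz to $\int_0^1 p\,\bar{\sigma}_0 z\,\dif\xi$ then gives~\eqref{eq:boundF} exactly.

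The second step substitutes the ISS estimate~\eqref{eq:ISSbetaGamma} into~\eqref{eq:boundF}, which yields
\begin{align*}
\abs{F\ped{b}(t)} \leq c_0\,\beta\bigl(\norm{z_0(\cdot)}_{L^2((0,1);\mathbb{R})},t\bigr) + c_0\,\gamma\bigl(\norm{v(\cdot)}_\infty\bigr) + c_2\norm{v(\cdot)}_\infty.
\end{align*}
I then define $\tilde{\beta}(r,t) \triangleq c_0\beta(r,t)$ and $\alpha(s) \triangleq c_0\gamma(s) + c_2 s$. Nonnegative scaling preserves $\mathcal{KL}$ as long as $c_0>0$; if $c_0=0$, I would replace $c_0$ by $\max(c_0,1)$, since enlarging the bound is harmless. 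For the same reason $\alpha \in \mathcal{K}$, because $\gamma \in \mathcal{K}_\infty$ and $c_2 s$ is nondecreasing with value zero at the origin, and $c_0\gamma$ is strictly increasing once $c_0>0$. This is precisely~\eqref{eq:FBBBBBBSS}.

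The only delicate point is the dependence of $c_0$ and $c_2$ on $\norm{v(\cdot)}_\infty$ when $\sigma_0$ and $\sigma_1$ are unbounded in $v$. To handle this, I would fold the velocity-dependent factor into the comparison functions. Let $\kappa(s)$ be a continuous nondecreasing majorant of $L\sup_{\abs{v}\leq s,\,\xi}p\,\sigma_0$ and of $L\sup_{\abs{v}\leq s,\,\xi}p\,\sigma_1$. Using $ab \leq a^2+b^2$, the cross term satisfies $\kappa(\norm{v}_\infty)\beta \leq \kappa(\norm{v}_\infty)^2 + \beta^2$. After taking $\kappa \geq \kappa(0) > 0$ so that all scalings stay in the right classes, this leaves a $\mathcal{KL}$ part $\beta^2$ (plus a constant multiple if needed) and a $\mathcal{K}$ part built from $\kappa^2$, $\kappa\gamma$ and $\kappa s$. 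Apart from this bookkeeping, the corollary follows directly from Lemma~\ref{lemma:boundedness} and~\eqref{eq:boundF}.
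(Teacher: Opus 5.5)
Your core argument is the paper's own proof: bound $F\ped{b}$ by Cauchy--Schwarz as in~\eqref{eq:boundF}, then substitute the ISS estimate of Lemma~\ref{lemma:boundedness}. Your explicit checks fill in details the paper leaves implicit, namely $\bar{\sigma}_0 = \sigma_0\mu/g \leq \sigma_0$, $\bar{\sigma}_2 \leq \sigma_1$, finiteness of $c_0, c_2$ on the compact range of $v$, and preservation of the comparison classes.

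The one step that fails as written is the bookkeeping in your last paragraph. You use $\kappa\beta \leq \kappa^2 + \beta^2$ with $\kappa(0)>0$. The resulting term $\kappa(\norm{v(\cdot)}_\infty)^2 \geq \kappa(0)^2 > 0$ does not vanish at zero input, so it cannot sit inside a class-$\mathcal{K}$ function. For $v \equiv 0$ your bound reads $\abs{F\ped{b}(t)} \leq \beta^2 + \kappa(0)^2$, which is not of the form~\eqref{eq:FBBBBBBSS}.

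The fix is to split $\kappa = \kappa(0) + \kappa_1$ with $\kappa_1(0)=0$. Keep $\kappa(0)\beta$ in the $\mathcal{KL}$ part, and apply Young's inequality only to $\kappa_1\beta \leq \kappa_1^2 + \beta^2$. Then $\tilde{\beta} = \kappa(0)\beta + \beta^2 \in \mathcal{KL}$ and $\alpha(s) = \kappa_1(s)^2 + \kappa(s)\gamma(s) + \kappa(s)s \in \mathcal{K}$.

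Under the quantifier order of Definition~\ref{def:IOS} as literally stated, the comparison functions are chosen after $v$ and $z_0$. Your $v$-dependent constants $c_0, c_2$ are therefore already admissible, which is effectively what the paper does. The corrected split is only needed if you want a uniform estimate when $\sigma_0$ or $\sigma_1$ is unbounded in $v$.
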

\begin{remark}\label{Remark:oododo}
As explained in detail in \cite{LuGreDistr}, using the partial time derivative in the definition of the virtual friction coefficient~\eqref{eq:muVDistr} would preclude a similar result to that asserted by Corollary~\ref{corollary:IOS}. 
\end{remark}
Remark~\ref{Remark:oododo} concludes the stability analysis of the model~\eqref{eq:PDEfric}-\eqref{eq:outputDistr}. The following Section~\ref{sect:diss} examines its passivity properties.

\subsubsection{Passivity}\label{sect:diss}
Compared to the lumped FrDB model, passivity may be preserved for the distributed formulation by imposing additional constraints on the spatially varying model parameters. For classical solutions of the infinite-dimensional system~\eqref{eq:PDEfric}-\eqref{eq:outputDistr}, the concepts of dissipativity and passivity are enounced according to Definition~\ref{def:idss2} below. A more general introduction to these notions, limited to linear systems, is offered in \cite{Zwart}.
\begin{definition}[Dissipativity and passivity]\label{def:idss2}
The system~\eqref{eq:PDEfric}-\eqref{eq:outputDistr22} is called \emph{dissipative} if, for all inputs $v \in C^1([0,T];\mathbb{R})$ and ICs $z_0 \in H^1((0,1);\mathbb{R})$ satisfying the BC~\eqref{eq:PDEfricBC}, there exist a supply rate $w : L^2((0,1);\mathbb{R})\times\mathbb{R} \mapsto \mathbb{R}$ and a storage function $W : L^2((0,1);\mathbb{R}) \mapsto \mathbb{R}_{\geq 0}$ such that
\begin{align}\label{eq:Fvres}
\begin{split}
\int_0^t w\bigl(f\ped{b}(\cdot,t^\prime),v(t^\prime)\bigr) \dif t^\prime \geq W\bigl(z(\cdot,t)\bigr) &-W\bigl(z_0(\cdot)\bigr), \\
& \qquad t \in [0,T].
\end{split}
\end{align}
It is called \emph{passive} if $w(f\ped{b}(\cdot,t),v(t)) = \langle f\ped{b}(\cdot,t),v(t)\rangle_{L^2((0,1);\mathbb{R})} = F\ped{b}(t)v(t)$.
\end{definition}

\begin{lemma}[Passivity]\label{lemma:DissF2}
Suppose that the \emph{velocity-independent micro-stiffness} coefficient $C^1([0,1];\mathbb{R}_{\geq 0}) \ni \Sigma_0 \triangleq \sigma_0p$ satisfies
\begin{align}\label{eq:diffp}
\dod{\Sigma_0(\xi)}{\xi} \leq -\dfrac{\Sigma_0(\xi)}{V(\xi)}\dod{V(\xi)}{\xi}, \quad \xi \in [0,1].
\end{align}
Then, the distributed friction model~\eqref{eq:PDEfric} is passive with storage function
\begin{align}\label{eq:VdissF2}
W\bigl(z(\cdot,t)\bigr) \triangleq \dfrac{L}{2}\int_0^1 \Sigma_0(\xi)z^2(\xi,t) \dif \xi.
\end{align}

\begin{proof}
Using \eqref{eq:VdissF2} and repeating similar calculations as in the proof of Lemma~\ref{lemma:DissF} provides
\begin{align}\label{eq:firstPpPs}
\begin{split}
 & F\ped{b}(t)v(t)-\dot{W}\bigl(z(\cdot,t)\bigr) = \int_0^1f\ped{b}(\xi,t)v(t) \dif \xi \\
&  \quad= L\int_0^1 \Sigma_0(\xi)\sigma_0(\xi)\dfrac{\abs{v(t)}_\varepsilon}{g\bigl(v(t),\xi\bigr)}z^2(\xi,t)\dif \xi  \\
& \qquad +L\int_0^1p(\xi)\sigma_1\bigl(v(t),\xi\bigr)\dfrac{\mu\bigl(v(t)\bigr)}{g\bigl(v(t),\xi\bigr)}v^2(t) \dif \xi\\
& \qquad + L\int_0^1 \Sigma_0(\xi)V(\xi)z(\xi,t)\dpd{z(\xi,t)}{\xi}\dif \xi, \quad t\in (0,T).
\end{split}
\end{align}
Integrating by parts the last term in~\eqref{eq:firstPpPs} and imposing the BC~\eqref{eq:PDEfricBC} yields
\begin{align}
\begin{split}
&  F\ped{b}(t)v(t)   \geq \dot{W}\bigl(z(\cdot,t)\bigr) \\
&  \quad - \dfrac{L}{2}\int_0^1 \biggl( \dod{\Sigma_0(\xi)}{\xi} +\dfrac{\Sigma_0(\xi)}{V(\xi)}\dod{V(\xi)}{\xi} \biggr)z^2(\xi,t)\dif \xi, \\
& \qquad \qquad \qquad \qquad \qquad \qquad \qquad \qquad \qquad  t \in (0,T).
\end{split}
\end{align}
Hence,~\eqref{eq:diffp} implies that
\begin{align}\label{eq:Fv}
\begin{split}
 F\ped{b}(t)v(t) & \geq \dot{W}\bigl(z(\cdot,t)\bigr), \quad t\in (0,T).
\end{split}
\end{align}
Integrating the above~\eqref{eq:Fv} proves~\eqref{eq:Fvres} with $W(z(\cdot,t))$ defined according to~\eqref{eq:VdissF2}.
\end{proof}
\end{lemma}
The inequality~\eqref{eq:diffp} is identical to that derived for the distributed LuGre friction model in \cite{LuGreDistr}, confirming the importance of correct modeling of the micro-stiffness coefficient in ensuring passivity properties. For instance, for a constant transport velocity and velocity-independent normalized micro-stiffness coefficient $\sigma_0(\xi)$, as considered in practical applications \cite{Tsiotras1,Tsiotras2,Deur1,Deur2}, it may be easily concluded from~\eqref{eq:diffp} that the constant pressure distribution $p(\xi) = p_0$ and the decreasing exponential one $p(\xi) = p_0\exp(-a\xi)$ used in \cite{Tsiotras1} qualify as valid choices, with the latter even rendering the model \emph{strictly dissipative} (the inequalities~\eqref{eq:diffp} and~\eqref{eq:Fvres} are both satisfied in a strict sense). Of course, for a constant contact pressure, the distributed and lumped formulations exhibit identical behaviors. On the other hand, for a parabolic pressure distribution $p(\xi) = p_0\xi(1-\xi)$ \cite{Tsiotras1}, it is possible to show that there exists no $\sigma_0 \in C^1([0,1];\mathbb{R}_{>0})$ verifying~\eqref{eq:diffp}. It is also interesting to note that, albeit seemingly artificial, the condition stated by~\eqref{eq:diffp} is also necessary to correctly reproduce some qualitative steady-state rolling contact behaviors that are not captured by constant pressure profiles, as better explained in \cite{Tsiotras2,Deur1,Deur2}. Once again, it is essential to observe that the result obtained in Lemma~\ref{lemma:DissF} requires adopting the total time derivative of the distributed state in the definition of the virtual friction coefficient~\eqref{eq:mu} \cite{LuGreDistr}. 

\begin{remark}
For simplicity, inequalities of the type~\eqref{eq:ISSbetaGamma},~\eqref{eq:FBBBBBBSS}, and~\eqref{eq:Fvres} havel been proven only for classical solutions ($\varepsilon \in \mathbb{R}_{>0}$ in~\eqref{eq:PDEfric}); the extension to mild solutions with $\varepsilon \in \mathbb{R}_{\geq0}$ in~\eqref{eq:PDEfric} may be worked out using using rather technical arguments, and is not pursued in this paper.
\end{remark}

\subsection{Linearized model}\label{sect:LinearDsitr}
The derivation of the linearized version of the distributed FrBD model proceeds with identical steps as those carried out in Section~\ref{sect:linearLump}, which are repeated here for completeness. Specifically, the linearized PDE reads
\begin{subequations}\label{eq:LuGreDistrDelta}
\begin{align}
\begin{split}
& \dpd{\tilde{z}(\xi,t)}{t} + V(\xi)\dpd{\tilde{z}(\xi,t)}{\xi} = -\dfrac{\sigma_0\bigl(v(t),\xi\bigr)\abs{v^\star}_\varepsilon}{g(v^\star,\xi)}\tilde{z}(\xi,t) \\
& \qquad \qquad + H_1\bigl(z^\star(\xi),v^\star,\xi\bigr)\tilde{v}(t), \quad(\xi,t) \in (0,1)\times(0,T), \label{eq:LugrePDEDelta}
\end{split} \\
& \tilde{z}(0,t)  = 0, \quad t \in (0,T), \label{eq:LugreVCDelta}\\
& \tilde{z}(\xi,0) = \tilde{z}_{0}(\xi), \quad \xi \in (0,1),
\end{align}
\end{subequations}
and the corresponding linearized friction coefficient and infinitesimal bristle force read
\begin{subequations}
\begin{align}\label{eq:mu_delta}
\tilde{\mu}\ped{b}(\xi,t) & \triangleq \bar{\sigma}_0(v^\star,\xi)\tilde{z}(\xi,t) + H_2\bigl(z^\star(\xi),v^\star,\xi\bigr)\tilde{v}(t), \\
\tilde{f}\ped{b}(\xi,t) & \triangleq L\tilde{\mu}\ped{b}(\xi,t)p(\xi),
\end{align}
\end{subequations}
where \begin{small}
\begin{subequations}
\begin{align}
\begin{split}
 H_1\bigl(z(\xi),v,\xi\bigr) & \triangleq \dfrac{\mu(v)}{g(v,\xi)}\biggl(1-\dfrac{v}{g(v,\xi)}\dpd{g(v,\xi)}{v}\biggr)\\
& \quad+\dfrac{v}{g(v,\xi)}\dod{\mu(v)}{v}-\dpd{\sigma_0(v,\xi)}{v}\dfrac{\abs{v}_\varepsilon}{g(v,\xi)}\\
& \quad - \dfrac{\sigma_0(v,\xi)z(\xi)}{g(v,\xi)}\biggr(\dod{\abs{v}_\varepsilon}{v} - \dfrac{\abs{v}_\varepsilon}{g(v,\xi)}\dpd{g(v,\xi)}{v}\biggl), 
\end{split}\\
\begin{split}
H_2\bigl(z(\xi),v,\xi\bigr) & \triangleq \dpd{\bar{\sigma}_0(v,\xi)}{v}z(\xi) + \dpd{\bar{\sigma}_2(v,\xi)}{v}v + \bar{\sigma}_2(v,\xi),
\end{split}
\end{align}
\end{subequations}\end{small}
and\begin{small}
\begin{subequations}
\begin{align}
\begin{split}
 \dpd{\bar{\sigma}_0(v,\xi)}{v} & =\dpd{\sigma_0(v,\xi)}{v}\biggl(1-\dfrac{\sigma_1(v,\xi)\abs{v}_\varepsilon}{g(v,\xi)}\biggr) \\
& \quad -\dfrac{\sigma_0(v,\xi)\abs{v}_\varepsilon}{g(v)}\dpd{\sigma_1(v,\xi)}{v}\\
& \quad  -\dfrac{\sigma_0(v,\xi)\sigma_1(v,\xi)}{g(v,\xi)}\biggl(\dod{\abs{v}_\varepsilon}{v}-\dfrac{\abs{v}_\varepsilon}{g(v,\xi)}\dpd{g(v,\xi)}{v}\biggr), 
\end{split} \\
\begin{split}
\dpd{\bar{\sigma}_2(v,\xi)}{v} & = \dfrac{\sigma_1(v,\xi)}{g^2(v,\xi)}\biggl(g(v,\xi)\dod{\mu(v)}{v}-\mu(v)\dpd{g(v,\xi)}{v}\biggr) \\
& \quad + \dfrac{\mu(v)}{g(v,\xi)}\dpd{\sigma_1(v,\xi)}{v}.
\end{split}
\end{align}
\end{subequations}\end{small}
The corresponding linearized force is formally given by
\begin{align}\label{eq:Flin1}
\tilde{F}\ped{b}(t) = \bigl(\tilde{\mathscr{K}}(\tilde{z},\tilde{v})\bigr)(t) \triangleq \int_0^1\tilde{f}\ped{b}(\xi,t) \dif \xi.
\end{align}
By visual inspection, it is straightforward to conclude that, as for its semilinear counterpart, the linearized model described by~\eqref{eq:LuGreDistrDelta}-\eqref{eq:Flin1} also enjoys ISS and IOS properties.

\section{Experiments and validation}\label{sect:exper}
The behavior of the FrBD model predicted in simulation is discussed in Section~\ref{sect:Numerical}, whereas Section~\ref{sect:valid} addresses its experimental validation. 

\subsection{Numerical experiments}\label{sect:Numerical}
The present section replicates some of the experiments conducted in \cite{Olsson}, and aimed at assessing the dynamical and tribological behaviors of the LuGre-like FrBD model. Specifically, the following phenomena are investigated: pre-sliding behavior, frictional lag hysteresis, and stick-slip dynamics. The simulation results reported below are restricted to the lumped version of the FrBD model.

\subsubsection{Pre-sliding displacement}

Microscopic motion occurs to build up the reactive friction force before break-away. This initial phase is known as \emph{pre-sliding displacement}. Courtney-Pratt and Eisner demonstrated that, under external forces smaller than the breakaway threshold, friction behaves like a nonlinear spring \cite{CPE}. To examine whether the FrBD model replicates this behavior, a first set of simulations was conducted using the model parameters listed in Table~\ref{tab:param2}. Specifically, a unit mass was exposed to an external sinusoidal force with an amplitude amounting to 90\% of the breakaway force, and different excitation frequencies $\omega = 1$, 5, and 10 Hz. The normalized resulting friction force $\mu\ped{b}$ is plotted against the displacement $x$ in Figure~\ref{fig:Fx}. The predicted response aligns qualitatively with the experimental findings of Courtney-Pratt and Eisner \cite{CPE}, and closely resembles that obtained using the LuGre model \cite{Olsson}. Figure~\ref{fig:Fx} was produced using the model parameters listed in Table~\ref{tab:param2}.


\begin{figure}
\centering
\includegraphics[width=1\linewidth]{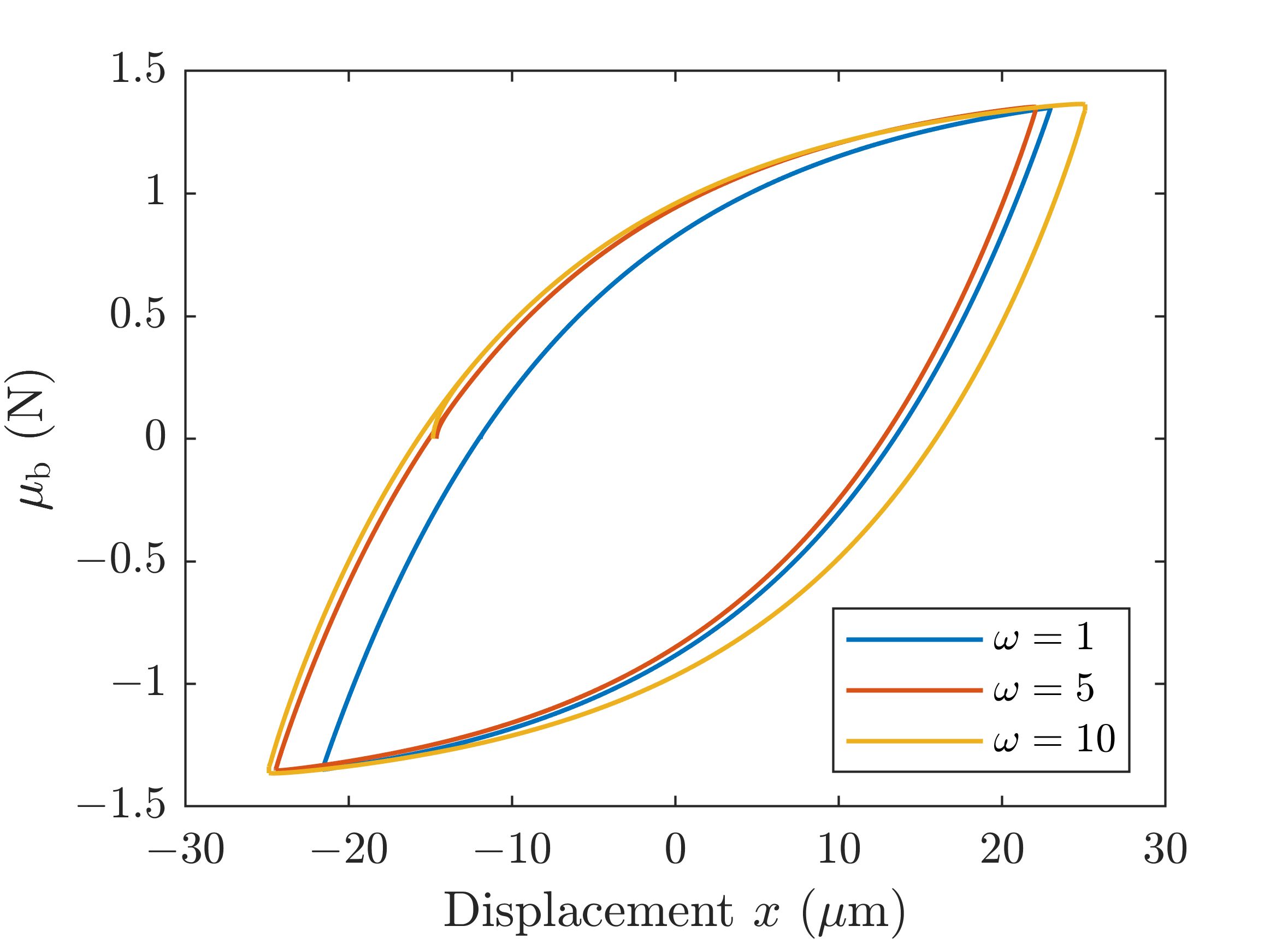} 
\caption{Pre-sliding displacement obtained by simulating the FrBD model, for different values of excitation frequencies $\omega = 1$, 5, and 10 Hz. Model parameters as in Table~\ref{tab:param2}.}
\label{fig:Fx}
\end{figure}

\begin{table}[h!]\centering 
\caption{Model parameters}
{\begin{tabular}{|c|c|c|c|}
\hline
Parameter & Description & Unit & Value \\
\hline 
$\sigma_0$ & Normalized micro-stiffness & $\textnormal{m}^{-1}$ & $10^4$ \\
$\sigma_1$ & Normalized micro-damping & $\textnormal{s}\,\textnormal{m}^{-1}$ &64.5 \\
$\sigma_2$ & Normalized viscous damping & $\textnormal{s}\,\textnormal{m}^{-1}$ &0.04 \\
$\mu\ped{d}$ & Dynamic friction coefficient & - &1 \\
$\mu\ped{s}$ & Static friction coefficient & - &1.5 \\
$v\ped{S}$ & Stribeck velocity & $\textnormal{m}\,\textnormal{s}^{-1}$ &0.01 \\
$\delta$ & Stribeck exponent & - &2 \\
$\varepsilon$ & Regularization parameter & $\textnormal{m}^2\,\textnormal{s}^{-2}$ & $0$ \\
\hline
\end{tabular} }
\label{tab:param2}
\end{table}

\subsubsection{Frictional lag}

The dynamic behavior of friction under varying velocity during unidirectional motion was investigated by Hess and Soom \cite{Hess}. Their experiments revealed the existence of hysteretic phenomena governing the relationship between friction force and velocity: the friction force is lower when the velocity decreases than when it increases. Moreover, the hysteresis loop widens or narrows as the rate of velocity change increases. This behavior cannot be explained by classical friction models but is accurately captured by the LuGre model. Clearly, for $\sigma_1(v) = 0$, and sufficiently small values of the viscous friction term in~\eqref{eq:fViscOOO}, which are typical in practical applications, the lumped LuGre formulation is equivalent to the FrBD model, which indeed exhibits similar behavior to those documented in \cite{Olsson}, and not reported here for brevity.

For nonzero values of the damping coefficient, the hysteresis loop is typically reversed for sufficiently large Stribeck velocities $v\ped{S}$, consistently with the predictions of the LuGre model. The normal hysteretic behavior is, however, recovered for sufficiently low values of $v\ped{S}$. Figure~\ref{fig:Fv} illustrates the hysteresis curves produced by imposing relative velocities with different frequencies $\omega = 25$, 50, and 100 Hz, respectively. The maximum attained normalized frictional force $\mu\ped{b}$ appears to decrease with the frequency, whereas the hysteresis cycles become narrower, coherently with the discussion above.

\begin{figure}
\centering
\includegraphics[width=1\linewidth]{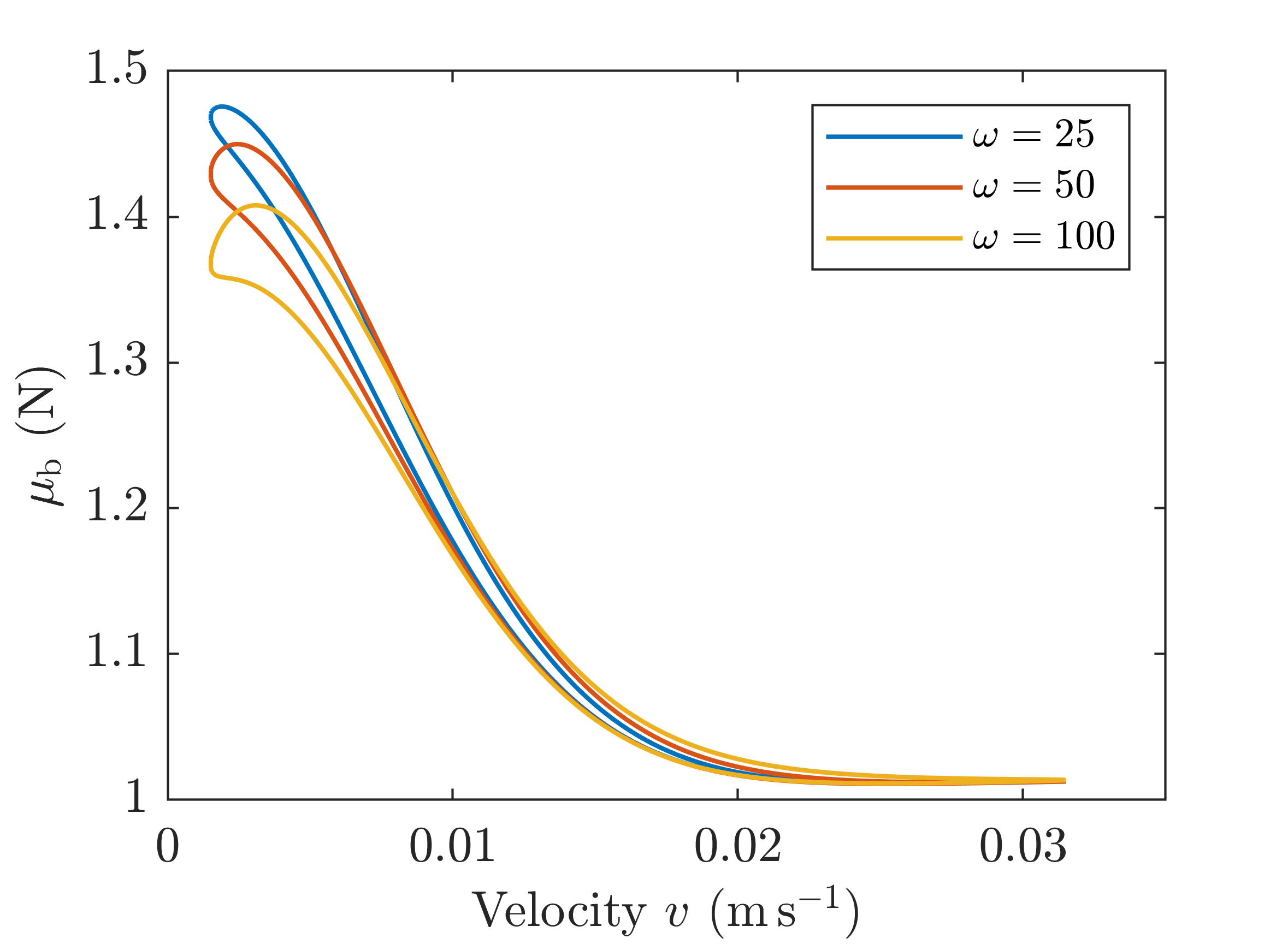} 
\caption{Hysteresis exhibited by the FrBD model for a bristle sliding with sliding velocities at different frequencies $\omega = 25$, 50, and 100 Hz. Model parameters as in Table~\ref{tab:param2}.}
\label{fig:Fv}
\end{figure}

\subsubsection{Stick-slip motion}
The last numerical experiment considered in this paper is concerned with stick-slip motion, a well-known phenomenon that typically arises at low velocities and is characterized by an irregular alternation between sticking and slipping phases. This jerky behavior is particularly undesirable in precision applications, such as machine tools, where it not only degrades control performance but also induces unwanted vibrations and noise. Understanding the mechanisms behind stick-slip motion is thus essential for its mitigation. The phenomenon originates from the friction force being higher at zero velocity than at small nonzero velocities. As motion initiates, the friction force drops abruptly, resulting in a sudden acceleration and contributing to the onset of slip. A standard setup employed to study numerically stick-slip motion is schematized in Figure~\ref{fig:massSpring}, and consists of a mass attached to a linear spring whose free end moves with imposed velocity $v\ped{ref}(t) \in \mathbb{R}_{>0}$. 
\begin{figure}
\centering
\includegraphics[width=0.9\linewidth]{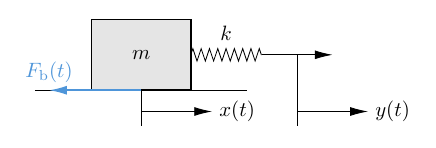} 
\caption{A schematic of the simulation setup described by~\eqref{eq:systemSpring}.}
\label{fig:massSpring}
\end{figure}
For the system illustrated in Figure~\ref{fig:massSpring}, the governing equations of motion may be inferred to be
\begin{subequations}\label{eq:systemSpring}
\begin{align}
m\ddot{x}(t) & = k\bigl(y(t)-x(t)\bigr)-F\ped{b}(t), \\
\dot{y}(t) & = v\ped{ref}(t), && t \in (0,T),
\end{align}
\end{subequations}
where $x(t) \in \mathbb{R}$ denotes the displacement of the mass, $y(t) \in \mathbb{R}_{\geq 0}$ that of the unconstrained end of the spring, and the bristle force $F\ped{b}(t)$ is given by an appropriate friction model.
Figure~\ref{fig:sim1_0} illustrates the numerical results obtained by simulating~\eqref{eq:systemSpring} using both the lumped FrBD and LuGre friction models, with $v\ped{ref} = 0.1$ $\textnormal{m}\,\textnormal{s}^{-1}$, $m = 1$ kg, and $k = 2$ N, and model parameters as in Table~\ref{tab:param2}. For both descriptions, the viscous term is modeled as $\mu\ped{v}(v) = \sigma_2\abs{v}$. Qualitatively, both the FrBD and LuGre models produce a similar phenomenology: initially, the mass remains at rest whilst the spring force increases linearly with the displacement $y(t)$. During this sticking phase, the friction force balances the spring force, preventing the motion of the block. Once the applied force reaches the static friction threshold, the mass begins to slide, and the friction force abruptly drops below the dynamic (Coulomb) level. Consequently, the mass accelerates, but as it moves, the spring contracts, reducing the spring force: the mass decelerates and eventually comes to a stop. This cycle of stick-slip motion is repeated several times. It is particularly interesting to observe that, whilst the displacement $x(t)$ and relative velocity $v(t)$ exhibit similar trends for both the FrBD and LuGre models, the former predicts a bristle force $ F\ped{b}(t)$ that closely follows the evolution of the displacement $z(t)$. This behavior can be attributed to the stronger influence of the normalized micro-damping coefficient on the dynamics of $z(t)$.
\begin{figure}
\centering
\includegraphics[width=1\linewidth]{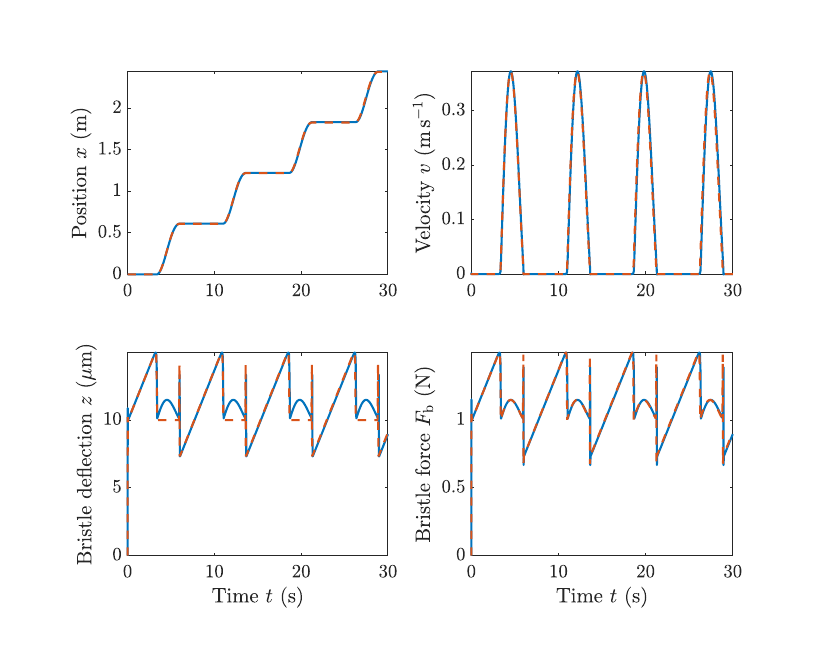} 
\caption{Simulation results for the mechanical system described by~\eqref{eq:systemSpring}, obtained using the FrBD (blue line) and LuGre (dashed orange line) friction models. Model parameters as in Table~\ref{tab:param2}.}
\label{fig:sim1_0}
\end{figure}


\subsection{Experimental validation}\label{sect:valid}
To experimentally validate the dynamic behavior of the lumped FrBD model, a more realistic application is considered involving a diaphragm valve system. This system couples electro-pneumatic, mechanical, and hydraulic subsystems \cite{Fellipe1,Fellipe2}. Variations in the chamber pressure act on the diaphragm, which converts them into a force that accelerates the mechanical system until a new equilibrium is reached. As the stem position changes, the resulting variation in pressure drop across the valve alters the flow rate of the hydraulic system. A mathematical model of the system is given by \cite{Fellipe1,Fellipe2}
\begin{subequations}\label{eq:systemSpringD}
\begin{align}
m\ddot{x}(t) & = S\ped{a}P(t)-kx(t)-F\ped{b}(t)-F_0, \label{eq:stem}\\
\dot{P}(t) & = \dfrac{K_P\mathrm{OP}(t) + P\ped{min}-P(t)}{\tau}, \label{eq:IP} && t \in (0,T),
\end{align}
\end{subequations}
where $x(t) \in \mathbb{R}$ denotes the stem position, $P(t) \in \mathbb{R}$ the pressure in the valve diaphragm chamber, and $\mathrm{OP}(t) \in [0,100]$ is the input expressed in percentage. Equations~\eqref{eq:stem} and~\eqref{eq:IP} describe the dynamics of the stem mass and I/P converter, respectively. In~\eqref{eq:systemSpringD}, $m\in \mathbb{R}_{>0}$ is the mass of the moving part, $S\ped{a}\in \mathbb{R}_{>0}$ represents the cross-sectional area of the valve, $k \in \mathbb{R}_{>0}$ the elastic spring constant, $F_0 \in \mathbb{R}$ indicates the applied preload, $K_P\in \mathbb{R}_{>0}$ is the pressure gain, $P\ped{min} \in \mathbb{R}_{\geq 0}$ denotes a minimum pressure level, and $\tau \in \mathbb{R}_{>0}$ is the time constant of the I/P converter.

Starting from~\eqref{eq:systemSpringD}, the FrBD model was validated using the open-access experimental dataset from \cite{Fellipe3}, which provides chamber pressure $P(t)$ and stem displacement $x(t)$ measurements. Two tests were considered: \textbf{Test 1} with ramp input $\mathrm{OP}(t)$, and \textbf{Test 2} with sinusoidal input. Except for $\sigma_1$, which was recalibrated using a genetic algorithm, the model parameters were retained from \cite{Fellipe1}. The values for both tests are reported in Table~\ref{tab:paramExp}.

Figure~\ref{fig:sim1} compares the simulated responses of~\eqref{eq:systemSpringD} with the FrBD and LuGre friction models against the experimental measurements, along with the trends of the internal variables $v(t)$ and $z(t)$. Both models show good agreement with the data, with the FrBD achieving slightly better performance indicators (Table~\ref{tab:perf}) in terms of prediction error for stem displacement $x(t)$. Apart from minor numerical discrepancies in the third and last panels of Figure~\ref{fig:sim1}, which mainly relate to sign reversals for the velocity $v(t)$, the FrBD reproduces system dynamics closely, whilst maintaining similar behavior to the LuGre model under the chosen parameter sets.
\begin{figure*}
\centering
\includegraphics[width=1\linewidth]{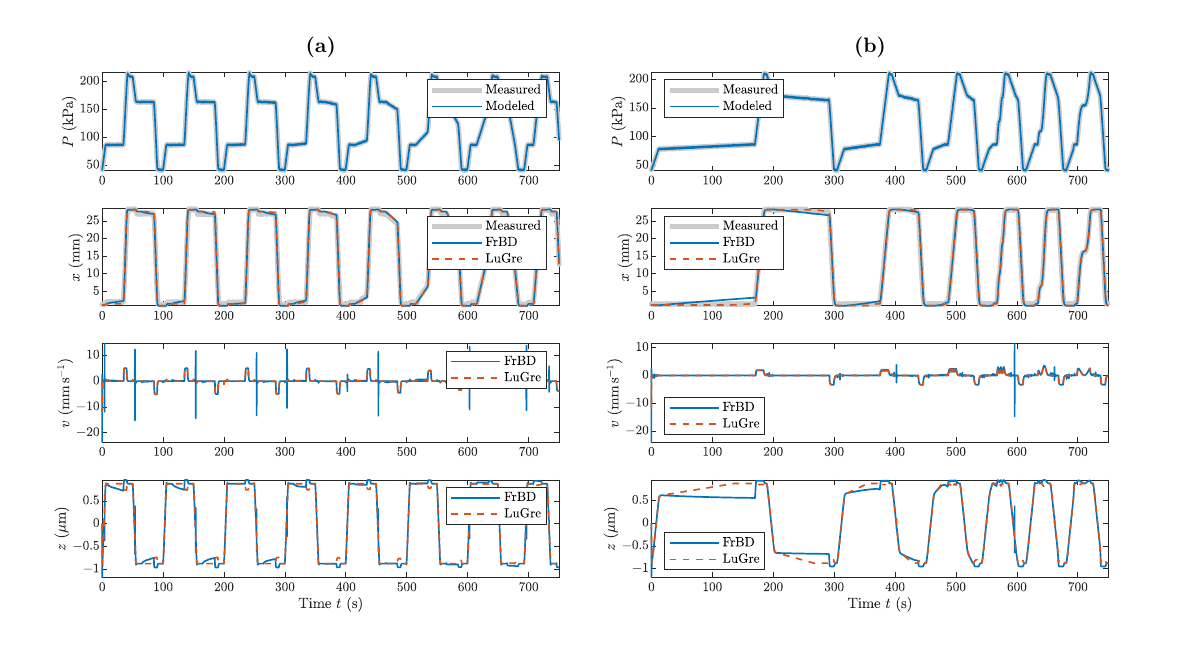} 
\caption{Results for the mechanical system described by~\eqref{eq:systemSpringD}, obtained using the FrBD (blue line) and LuGre (dashed orange line) friction models: \textbf{(a) Test 1}; \textbf{(b) Test 2}. Model parameters as in Table~\ref{tab:paramExp}.}
\label{fig:sim1}
\end{figure*}


\begin{table}[]
\centering
\caption{Model parameter}
\begin{tabular}{@{}|llll|@{}}
\toprule
\multicolumn{4}{|c|}{\textit{Valve parameters}}                                                                                                                                                                        \\
\midrule
\multicolumn{1}{|l|}{\textbf{Parameter}}          & \multicolumn{1}{l|}{\textbf{Description}}                                   & \multicolumn{1}{l|}{\textbf{Unit}}                             & \textbf{Value}      \\
\midrule
\multicolumn{1}{|l|}{$m$}                         & \multicolumn{1}{l|}{Mass}                                                   & \multicolumn{1}{l|}{kg}                                        & 1.6                 \\
\multicolumn{1}{|l|}{$P\ped{min}$}                & \multicolumn{1}{l|}{Minimum pressure}                                       & \multicolumn{1}{l|}{Pa}                                        & 41276.40            \\
\multicolumn{1}{|l|}{$K_P$}                       & \multicolumn{1}{l|}{Pressure gain}                                          & \multicolumn{1}{l|}{Pa}                                        & 1666.49             \\
\multicolumn{1}{|l|}{\multirow{2}{*}{$\tau$}}      & \multicolumn{1}{l|}{Time constant (\textbf{Test 1})}                                 & \multicolumn{1}{l|}{\multirow{2}{*}{s}}                        & 0.933               \\
\multicolumn{1}{|l|}{}                            & \multicolumn{1}{l|}{Time constant (\textbf{Test 2})}                                 & \multicolumn{1}{l|}{}                                          & 0.425               \\
\multicolumn{1}{|l|}{$S\ped{a}$}                       & \multicolumn{1}{l|}{Cross-sectional area}                                   & \multicolumn{1}{l|}{$\textnormal{m}^2$}                        & $445\cdot 10^{-4}$  \\
\multicolumn{1}{|l|}{$k$}                         & \multicolumn{1}{l|}{Spring constant}                                        & \multicolumn{1}{l|}{$\textnormal{N}\,\textnormal{m}^{-1}$}     & 203495.8            \\
\multicolumn{1}{|l|}{$F_0$}                       & \multicolumn{1}{l|}{Preload}                                                & \multicolumn{1}{l|}{N}                                         & 2578.3              \\
\midrule
\multicolumn{4}{|c|}{\textit{Friction parameters}}                                                                                                                                                                     \\
\midrule 
\multicolumn{1}{|l|}{\textbf{Parameter}}          & \multicolumn{1}{l|}{\textbf{Description}}                                   & \multicolumn{1}{l|}{\textbf{Unit}}                             & \textbf{Value}      \\
\midrule
\multicolumn{1}{|l|}{$\sigma_0$}                  & \multicolumn{1}{l|}{Normalized micro-stiffness}                     & \multicolumn{1}{l|}{$\textnormal{m}^{-1}$}                     & $6.82\cdot 10^{7}$  \\
\multicolumn{1}{|l|}{$\sigma_1$}                             & \multicolumn{1}{l|}{Normalized micro-damping}                       & \multicolumn{1}{l|}{}                                          & 701.97              \\
\multicolumn{1}{|l|}{$\sigma_2$}                             & \multicolumn{1}{l|}{Normalized viscous damping}                     & \multicolumn{1}{l|}{}                                          & $2.97\cdot 10^{3}$  \\
\multicolumn{1}{|l|}{$\mu\ped{d}$}                & \multicolumn{1}{l|}{Dynamic friction coefficient}                           & \multicolumn{1}{l|}{-}                                         & 39.73               \\
\multicolumn{1}{|l|}{$\mu\ped{s}$}                & \multicolumn{1}{l|}{Static friction coefficient}                            & \multicolumn{1}{l|}{-}                                         & 59.86               \\
\multicolumn{1}{|l|}{$v\ped{S}$}                  & \multicolumn{1}{l|}{Stribeck velocity}                                      & \multicolumn{1}{l|}{$\textnormal{m}\,\textnormal{s}^{-1}$}     & $6.42\cdot 10^{-3}$ \\
\multicolumn{1}{|l|}{$\delta$}                  & \multicolumn{1}{l|}{Stribeck exponent}                                      & \multicolumn{1}{l|}{-}     & 2 \\
\multicolumn{1}{|l|}{$\varepsilon$}                  & \multicolumn{1}{l|}{Regularization parameter}                                      & \multicolumn{1}{l|}{$\textnormal{m}^2\,\textnormal{s}^{-2}$}     & 0 \\
\bottomrule            
\end{tabular}
\label{tab:paramExp}
\end{table}

\begin{table}[]
\caption{Performance indicators}
\begin{tabular}{|l|ll|ll|}
\hline
\multicolumn{1}{|c|}{\textbf{}}          & \multicolumn{2}{c|}{\textbf{FrBD}}                                              & \multicolumn{2}{c|}{\textbf{LuGre}}                                              \\ \hline
\multicolumn{1}{|c|}{\textbf{Indicator}} & \multicolumn{1}{c|}{\textbf{Test 1}}     & \multicolumn{1}{c|}{\textbf{Test 2}} & \multicolumn{1}{c|}{\textbf{Test 1}}      & \multicolumn{1}{c|}{\textbf{Test 2}} \\ \hline
\textbf{RMSE}                            & \multicolumn{1}{l|}{$4.45\cdot 10^{-4}$} & $7.06\cdot 10^{-4}$                  & \multicolumn{1}{l|}{$4.77 \cdot 10^{-4}$} & $4.69 \cdot 10^{-4}$                 \\ \hline
\textbf{Mean}                            & \multicolumn{1}{l|}{$2.04\cdot 10^{-5}$} & $-6.72\cdot 10^{-5}$                  & \multicolumn{1}{l|}{$2.94\cdot 10^{-5}$}  & $1.27\cdot 10^{-4}$                  \\ \hline
\textbf{Std}                             & \multicolumn{1}{l|}{$4.45\cdot 10^{-4}$} & $7.03\cdot 10^{-4}$                  & \multicolumn{1}{l|}{$4.76 \cdot 10^{-4}$} & $4.51 \cdot 10^{-4}$                 \\ \hline
\textbf{Max}                             & \multicolumn{1}{l|}{0.0026}              & 0.0021                               & \multicolumn{1}{l|}{0.0026}               & 0.0021                               \\ \hline
\end{tabular}
\label{tab:perf}
\end{table}

\section{Conclusions}\label{ref:Concl}
This paper introduced a novel class of first-order, rate-dependent friction (FrBD) models, derived physically through an inversion of the friction curve. The proposed approach offers valuable guidelines for developing rate-dependent friction models grounded on solid physical principles, as opposed to relying on empirical arguments. The general methodology outlined in the article was explicitly applied to derive a lumped formulation, inspired by the structure of the LuGre model, which was thoroughly analyzed in terms of its stability and passivity properties. A distributed parameter extension was also developed, which is more suitable for the modeling of rolling contact phenomena. Similar to its lumped counterpart, the mathematical properties of the distributed formulation were rigorously examined, revealing both striking analogies and key differences with the LuGre model. In particular, the proposed FrBD formulation appears to naturally possess passivity properties, whereas the LuGre requires the adoption of a velocity-dependent damping coefficient. Differences and analogies were also evident in the numerical validation of the model, which aimed to assess its tribological behavior in relation to typical phenomena observed in mechanical systems, such as pre-sliding displacement, frictional lag, and stick-slip. The lumped FrBD model was also validated experimentally, considering a realistic application involving a diaphragm valve system. 

Future research may focus on deriving equivalent FrBD models by postulating alternative expressions for the rheological behavior of the bristle element and the friction coefficient as a function of the sliding velocity. Additionally, experimental validation in the context of more complex mechanical and mechatronic systems is necessary, especially concerning the distributed FrBD variant. Finally, the proposed model – or family of models – holds potential for applications in control and observer design.

\section*{Acknowledgment}
The authors gratefully acknowledge financial support from the project FASTEST (Reg. no. 2023-06511), funded by the Swedish Research Council.

\section*{Declaration of interest}
Declaration of interest: none.


\end{document}